\newcolumntype{Y}{>{\centering\arraybackslash}X}
\definecolor{darkred}  {rgb}{0.5,0,0}
\definecolor{darkblue} {rgb}{0,0,0.5}
\definecolor{darkgreen}{rgb}{0,0.5,0}
\newtheorem{defn}{Definition}
\newtheorem{lemma}{Lemma}
\newtheorem{theorem}{Theorem}
\newtheorem{corollary}{Corollary}
\newenvironment{myitemize}
{ \begin{itemize}
    \setlength{\itemsep}{5pt}
    \setlength{\parskip}{0pt}
    \setlength{\parsep}{0pt}     }
{ \end{itemize}   }
\def\ket#1{| #1 \rangle}
\title{Improved Clifford operations in constant commutative depth}
\title{Improved Clifford operations in constant commutative depth \\
{\Large(extended abstract)}}
\author{Richard Cleve\footnote{Institute for Quantum Computing and School of Computer Science, University of Waterloo.} \and Zhiqian Ding$^{\ast}$ \and Luke Schaeffer$^{\ast}$}
\date{}
\begin{document}

\maketitle

\begin{abstract}
    The commutative depth model allows gates that commute with each other to be performed in parallel.
    We show how to compute Clifford operations in constant commutative depth more efficiently than was previously known.
    Bravyi, Maslov, and Nam [\textit{Phys.~Rev.~Lett.} 129:230501, 2022] showed that every element of the Clifford group (on $n$ qubits) can be computed in commutative depth 23 and size $O(n^2)$. We show that the Prefix Sum problem can be computed in commutative depth 16 and size $O(n \log n)$, improving on the previous depth 18 and size $O(n^2)$ bounds. We also show that, for arbitrary Cliffords, the commutative depth bound can be reduced to 16. Finally, we show some lower bounds: that there exist Cliffords whose commutative depth is at least 4; and that there exist Cliffords for which any constant commutative depth circuit has size $\Omega(n^2)$.
\end{abstract}
    
\ifbool{FULL}{
\section{Introduction and summary of results}
\label{sec:intro}
}{
\section{Introduction and statement of results}\label{sec:intro}
}

The standard notion of quantum circuit depth is based on the idea that any set of gates that act on distinct qubits can, in principle, be performed simultaneously in one parallel step.
We investigate a relaxation of this notion of depth, where we assume that any set of \emph{mutually commuting} gates can be performed in one parallel step.

The theoretical motivation for such a model is that if $m$ (possibly overlapping) unitary gates \linebreak 
$U_1 = e^{-iH_1t}, \dots, U_m = e^{-iH_mt}$ are mutually commuting then the Hamiltonians $H_1, \dots, H_m$ also commute and $U_1U_2\cdots U_m = e^{-i(H_1 + \cdots + H_m)t}$.
Therefore, one can, in principle, apply the processes associated with $H_1, \dots, H_m$ simultaneously to compute $U_1\cdots U_m$ in one parallel step.

In practice, many physical implementations of unitary operations $U$ are not as simple as applying a time-independent Hamiltonian acting on the same Hilbert space as $U$ for a fixed amount of time.
We do \emph{not} claim that if $U_1, \dots, U_m$ commute then every physical implementation of these gates can be performed in parallel.%
\footnote{In fact, for the standard notion of depth, implementing gates that act on separate qubits in parallel is nontrivial. See~\cite{FiggattO+2019} for a parallel implementation of gates acting on separate qubits. See~\cite{GrzesiakB+2020} for a parallel implementation of overlapping gates that commute.}
Rather, the theoretical existence of a set of natural commuting Hamiltonians is evidence that physical implementations that can be parallelized might be found.

The benefit of considering \emph{commutative depth} is that some computations require dramatically less commutative depth than standard depth (the reduction can be from $\Theta(n/\log n)$ to constant).
So, even if an implementation that takes advantage of commutative depth is more challenging than one that does not, the advantage of performing a computation in much fewer steps might justify the trouble of implementing commuting gates in parallel.
Related prior work includes H\o yer and \u{S}palek's reversible \emph{fanout} gate \cite{hoyer2005quantum} (see also \cite{takahashi2016collapse}) and the \emph{global tunable} gate (discussed in \cite{allcock2024constant}, with further references therein); each of these multi-qubit gates can be viewed as an arrangement of commuting 2-qubit gates.

\subsection{Every element of the Clifford group has constant commutative depth}

A remarkable result of Bravyi, Maslov and Nam \cite{BravyiMaslov+2022} implies that the commutative depth of each element of the Clifford group is at most 23 (with respect to the gate set 
$\{\mbox{\textsf{CNOT}},\mbox{\textsf{C}$Z$},\mbox{\textsf{C}$Y$}\}\cup\langle H,S\rangle$).

A drawback of this construction is the number of 2-qubit gates that occur.
The method in~\cite{BravyiMaslov+2022} converts Clifford group elements---some of which are computable with $O(n)$ gates---into circuits with constant commutative depth consisting of $\Theta(n^2)$ gates.
In general, constant commutative depth is attained at the cost of possibly increasing the total number of gates to $\Theta(n^2)$.
An interpretation of this is that the amount of ``work" involved in a parallel step may entail an amount of hardware that scales as $\Theta(n^2)$.

Our results are about efficiency improvements in constant commutative depth constructions of Clifford operations.
Note that we are considering \textit{in-place} circuits, that use no ancilla qubits. The computations are easier if one can employ many ancilla qubits, each initialized in state $\ket{0}$.

\subsection{New results about the Prefix Sum problem}

The \emph{Prefix Sum} problem is the problem of implementing the unitary operation $P$ on $n$ qubits satisfying
\begin{align}\label{eq:prefix-sum-intro}
    P\ket{b_1}\ket{b_2}\ket{b_3}\cdots\ket{b_n} = \ket{b_1}\ket{b_1 + b_2}\ket{b_1 + b_2 + b_3}\cdots\ket{b_1 + b_2 + b_3 + \cdots + b_n},
\end{align}
for all $b_1, b_2, \dots, b_n \in \{0,1\}$ (where additions are mod 2).
This is easily computed in depth $n-1$ as:
\tikzsetnextfilename{StaircaseV4}
\begin{align}\label{circuit:staircase}
	\begin{tikzpicture}[baseline=(current bounding box.center)]
		\node[scale=0.65] {
			\begin{quantikz}[row sep={6mm,between origins},column sep=4mm]
				\lstick{$\ket{b_1}$} & \ctrl{1} & & & & & \rstick{$\ket{b_1}$} \\
				\lstick{$\ket{b_2}$} & \targ{} & \ctrl{1} & & & & \rstick{$\ket{b_1 + b_2}$} \\
				\lstick{$\ket{b_3}$} & & \targ{} & \ctrl{1} & & & \rstick{$\ket{b_1 + b_2 + b_3}$} \\
				\lstick{$\ket{b_4}$} & & & \targ{} & \wire[l][1]["\ddots"{below=-1mm},xshift=2mm]{a} & & \rstick{$\ket{b_1 + b_2 + b_3 + b_4}$} \\
				\lstick{\vdots\ \ } & & & & & \ctrl{1} & \rstick{\ \ \vdots} \\
				\lstick{$\ket{b_n}$} & & & & & \targ{} & \rstick{$\ket{b_1 + b_2 + \cdots + b_n}$} \\
			\end{quantikz}
		};
	\end{tikzpicture}
\end{align}
It is well known that Prefix Sum can be computed in $\Theta(\log n)$ standard depth while preserving size $\Theta(n)$ by the method of Ladner and Fischer~\cite{LadnerFischer1980}.
Their construction is recursive and unfolds to consist of \textsf{CNOT} gates arranged in binary tree structures.
The rough idea is illustrated by this circuit for the $n = 8$~case (see also Eq.~\eqref{fig:LF-circuit} for the $n=16$ case):
\tikzsetnextfilename{LadnerFischer1980V2}
\begin{align}\label{circuit:LF-intro}
	\begin{tikzpicture}[baseline=(current bounding box.center)]
	\node[scale=0.65] {
	\begin{quantikz}[row sep={6mm,between origins},column sep=4mm]
		\lstick{$\ket{b_1}$} & \ctrl{1} &          &          &          &          & \rstick{$\ket{b_1}$} \\
		\lstick{$\ket{b_2}$} & \targ{}  & \ctrl{2} &          &          & \ctrl{1} & \rstick{$\ket{b_1 + b_2}$} \\
		\lstick{$\ket{b_3}$} & \ctrl{1} &          &          &          & \targ{}  & \rstick{$\ket{b_1 + b_2 + b_3}$} \\
		\lstick{$\ket{b_4}$} & \targ{}  & \targ{}  & \ctrl{4} & \ctrl{2} & \ctrl{1} & \rstick{$\ket{b_1 + b_2 + b_3 + b_4}$} \\
		\lstick{$\ket{b_5}$} & \ctrl{1} &          &          &          & \targ{}  & \rstick{$\ket{b_1 + b_2 + b_3 + b_4 + b_5}$} \\
		\lstick{$\ket{b_6}$} & \targ{}  & \ctrl{2} &          & \targ{}  & \ctrl{1} & \rstick{$\ket{b_1 + b_2 + b_3 + b_4 + b_5 + b_6}$} \\
		\lstick{$\ket{b_7}$} & \ctrl{1} &          &          &          & \targ{}  & \rstick{$\ket{b_1 + b_2 + b_3 + b_4 + b_5 + b_6 + b_7}$} \\
		\lstick{$\ket{b_8}$} & \targ{}  & \targ{}  & \targ{}  &          &          & \rstick{$\ket{b_1 + b_2 + b_3 + b_4 + b_5 + b_6 + b_7 + b_8}$} \\
	\end{quantikz}
	};
	\end{tikzpicture}
\end{align}


The constructions in \cite{BravyiMaslov+2022} achieve commutative depth $18$ for Prefix sum, but at a cost of $\Theta(n^2)$ gates.
Note that the number of gates per parallel step is up to $\Theta(n^2)$.

Our first contribution shows how to attain constant commutative depth without the quadratic blow-up in size.
\begin{theorem}\label{thm:prefix-sum}
    For all even $n$, Prefix Sum can be computed in commutative depth 16 and size $\Theta(n \log n)$. (For odd $n$, the depth is 17.)
\end{theorem}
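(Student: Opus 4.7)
My approach is to construct an explicit $16$-layer commutative-depth circuit for Prefix Sum, starting from the Ladner--Fischer (LF) binary-tree circuit, which has $O(\log n)$ standard depth and $O(n)$ total CNOT gates arranged in an \emph{upsweep} phase (computing subtree XORs up the tree) and a \emph{downsweep} phase (distributing partial prefix sums back down). The controls and targets of successive LF layers chain across levels, so LF cannot be flattened to constant commutative depth by simply reinterpreting each standard-depth layer as a commutative layer.

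To break these chains, I would adapt the Hadamard-conjugation compression technique of Bravyi--Maslov--Nam (BMN), which swaps a qubit's role between ``source'' and ``target'' at the cost of inserting single-qubit layers. Applied generically, BMN compression expands each LF level into an almost-dense bipartite CNOT pattern, giving commutative depth $18$ and size $\Theta(n^2)$ for Prefix Sum. The improvement will come from exploiting the very regular \emph{geometric-stride} structure of LF (level-$k$ gates act between qubits at distance $2^k$), which permits a bespoke schedule where each of the $16$ commutative layers contains at most $O(n \log n)$ CNOTs, summing to the target $\Theta(n \log n)$ total gate count.

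The construction and its verification are the bulk of the proof. I would (i) specify, for each of the $16$ layers, a partition of the $n$ qubits into ``sources only'' and ``targets only,'' (ii) list the CNOTs of each layer as specific subsets of LF gates, possibly duplicated across layers, and (iii) verify both pairwise commutativity within each layer and the matrix identity $L = M_{16} M_{15} \cdots M_1$, where $L$ is the lower-triangular all-ones prefix-sum matrix over $\mathbb{F}_2$. Intuitively, the count of $16$ arises as two phases of $8$ layers each: a constant-depth compression of the upsweep and a symmetric compression of the downsweep, with the intermediate single-qubit layers absorbed into the counting. For odd $n$, the LF tree is unbalanced at one endpoint, and handling the leftover leaf cleanly adds one extra commutative layer, yielding depth $17$.

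The main obstacle is the detailed combinatorial design of the $16$-layer schedule: each layer's CNOTs must pairwise commute (equivalently, form a bipartite source/target pattern), their product as $\mathbb{F}_2$ matrices must equal $L$, and the total gate count must remain $\Theta(n \log n)$ despite the in-place constraint (no ancillas). The specific constant $16$ reflects a tight optimization of the BMN Hadamard-conjugation scheme tailored to the LF geometric structure; proving correctness will require careful layer-by-layer invariant tracking and a matrix-product identity verified by induction on the tree levels.
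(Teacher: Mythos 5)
Your proposal correctly identifies the Ladner--Fischer circuit as the starting point and correctly senses that some Hadamard-conjugation trick is needed to break the control/target chains, but it stops at the level of a strategy sketch and misses the three structural ideas on which the paper's proof actually rests.

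First, the paper does not attempt to flatten the LF circuit on $n$ qubits directly. It works on \emph{two} copies of the register and constructs $P \oplus P$ on $2n$ qubits, then glues the two prefix sums together with one final commuting layer of CNOTs from the last qubit of the first half to every qubit of the second half (Eq.~\eqref{fig:P-final-step}). The even/odd dichotomy in the statement comes exactly from this doubling: for $n = 2(2^k-1)$ the depth is $16 = 15 + 1$, and an unmatched parity costs one more layer. Your ``two phases of 8 layers'' intuition does not correspond to anything in the construction, and without the doubling trick there is no clear path to the constant $16$.

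Second, the reason $P \oplus P$ is cheap is the identity $P = RL$ together with the symmetry $R = H^{\otimes n} B L^{-1} B H^{\otimes n}$, where $B$ is the bit-reversal permutation (Eqs.~\eqref{eq:R-in-terms-of-L}--\eqref{eq:L-in-terms-of-R}). This lets the paper invoke the $M \oplus M^{-1}$ gadget (Eq.~\eqref{fig:M-M}), which realizes $M \oplus M^{-1}$ in commutative depth $3$ using three ``linear addition'' layers; each such layer is a bipartite block of CNOTs between the two halves and so commutes internally. The depth count $15$ for $P \oplus P$ (and $16$ after the final merge) falls out of chaining these gadgets and absorbing $H$, $B$, and the $R + L$ merge (Eq.~\eqref{fig:R+L}). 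Your plan of ``partition qubits into sources and targets per layer, verify $L = M_{16}\cdots M_1$ by induction'' is not wrong in principle, but it is precisely the hard part you acknowledge you have not done, and the paper's route avoids it entirely by converting the problem into direct-sum gadgets.

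Third, the size bound $\Theta(n \log n)$ is not a byproduct of careful scheduling but of an explicit weight analysis of the four matrices $L, R, L^{-1}, R^{-1}$ (Appendix~\ref{appendix:A}): $L$ and $R$ have $\Theta(n \log n)$ ones while $L^{-1}$ and $R^{-1}$ have only $O(n)$, which is what keeps every commuting layer at $O(n \log n)$ gates. Your proposal asserts the size bound but offers no mechanism for it.

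In short, the gap is that you have outlined \emph{what would have to be true} of a $16$-layer schedule without supplying the decomposition that produces one. The missing ingredients are (i) the $P \oplus P$ doubling followed by a one-layer merge, (ii) the $L$/$R$ reflection symmetry under $H^{\otimes n} B (\cdot)^{-1} B H^{\otimes n}$, and (iii) the constant-depth $M \oplus M^{-1}$ gadget built from linear-addition layers. Without these, neither the constant $16$ nor the $\Theta(n \log n)$ size follows.
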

\noindent
Note that the number of gates per parallel step in this new construction is $O(n \log n)$.

Our methodology exploits structural properties of the Ladner-Fischer circuits.
Namely, that they can be decomposed into two parts, $L$ and $R$, that have these properties: (a) $L$ and $R$ are ``sparse" as linear operators; and (b) $R$ is equivalent to $H^{\otimes n} L^{-1} H^{\otimes n}$ if the order of qubits is reversed.

\medskip

\noindent
A summary of results for (in-place) computations of Prefix Sum is:\vspace*{1mm}

\begin{center}
    \renewcommand{\arraystretch}{1.1}
    \begin{tabular}{|l||c|c|c|}
    \hline
    Circuit construction & standard depth & commutative depth & size \\ \hline \hline
    Trivial, Eq.~\eqref{circuit:staircase} & $n-1$ & $n-1$ & $n-1$ \\ \hline
    Ladner-Fischer \cite{LadnerFischer1980}, Eqs.~\eqref{circuit:LF-intro}\eqref{fig:LF-circuit} & $\Theta(\log n)$ & $\Theta(\log n)$ & $\Theta(n)$ \\ \hline
    Bravyi \textit{et al.} \cite{BravyiMaslov+2022} & - & $18$ & $\Theta(n^2)$ \\ \hline
    \textbf{New construction} & - & $16$ & $\Theta(n \log n)$ \\ \hline
\end{tabular}
\end{center}

\subsection{New improvements to the Bravyi, Maslov, and Nam  \texorpdfstring{\cite{BravyiMaslov+2022} }{}construction}

For standard depth, Jiang et al.~\cite{jiang2020optimal} show that optimal depth of an arbitrary $n$-qubit Clifford operation is
$\Theta \! \left( \max \left\{ \log n,\, {n^2}/((n+m) \log (n+m)) \right\} \right)$, where 
$m$ is the number of ancilla qubits.
This is $\Theta(n / \log n)$ depth for the case of in-place circuits (i.e., with no ancillas).%
\footnote{In fact, the result in \cite{jiang2020optimal} implies that the depth is $\Theta(n / \log n)$ even if $O(n)$ ancilla qubits are permitted.}

The lower bounds for standard circuit depth come in two flavours: light cones and counting arguments.
Both of these are broken by the commutative depth model. A light cone lower bound argues that each input bit influences a bounded number of bits in the next layer, which influence a bounded number of bits in the next layer, and so on, and therefore functions which magnify the influence of a bit cannot be computed in low depth. Commutative depth breaks this by allowing layers where one bit influences many. The fan-out gate~\cite{hoyer2005quantum}, for example, requires standard depth $\log n$, but has commutative depth $1$.  

Counting arguments are based on the principle that there are a finite number of choices for each layer in a circuit, and this bounds the number of distinct circuits of a given depth. This must be at least the number of functions we wish to compute, otherwise there must be \emph{some} circuit we cannot compute. Commutative depth significantly weakens counting arguments, since vastly more commuting layers of gates are possible compared to disjoint layers.


The remarkable result of Bravyi, Maslov and Nam \cite{BravyiMaslov+2022} shows that any Clifford operation can be implemented in constant commutative depth.
In their model, the commuting layers (which they call ``GCZ gates") are equivalent to collections of 2-qubit C$Z$ gates.
They use $20$ layers%
 \footnote{Technically, Bravyi \textit{et al.} use an extra layer when $n$ is not divisible by three. We have a similar issue with parity. Since we quote the lower number (for even $n$) in our results, we extend their result the same courtesy. We also note that one can also ``pad'' the transformation by taking the direct sum with an identity ($I_1$ or $I_2$) to fix the divisibility at the cost of one or two dirty ancillas.} of GCZ gates. However, they use a decomposition \cite{Bravyi2020HadamardFreeCE} which has three additional layers of single-qubit gates, and thus the commutative depth is $23$.

Our contribution here is an improved construction, where every element of the Clifford group can be computed in commutative depth $16$ with $\Theta(n^2)$ gates. The key underlying idea is a better implementation of a certain group commutator. As part of the construction, we also get circuits of commutative depth $11$ for linear operations on $n$ qubits (for the case of even $n$).

Finally, we show two lower bounds.
First, that the commutative depth of an arbitrary Clifford operation is at least 4.
Second, there exists a Clifford operation for which any constant commutative depth circuit computing it has size $\Omega(n^2)$. This is noteworthy because it is larger than the size without any depth restriction, which is $O(n^2/\log )$~\cite{AaronsonGottesman2004}.



\medskip

\noindent
A summary of results for (in-place) computations of arbitrary Clifford operations is:\vspace*{1mm}

\begin{center}
\renewcommand{\arraystretch}{1.2}
\begin{tabular}{|l||c|c|c|}
    \hline
    Circuit construction & standard depth & \!commutative depth\! & size \\ \hline \hline
    Aaronson et al. \cite{AaronsonGottesman2004} / Jiang et al. \cite{jiang2020optimal} & $\Theta(\frac{n}{\log n})$ & $\Theta(\frac{n}{\log n})$ & $\Theta(\frac{n^2}{\log n})$ \\ \hline
    Bravyi \textit{et al.} \cite{BravyiMaslov+2022} & - & $23$ & $O(n^2)$ \\ \hline
    \textbf{New construction} & - & $16$ & $O(n^2)$ \\ \hline
    \textbf{New depth lower bound} & - & $\ge 4$ & - \\ \hline
    \textbf{New size lower bound for depth $O(1)$} & - & $O(1)$ & $\Omega(n^2)$ \\ \hline
\end{tabular}
\end{center}

\section{Definitions and notation}

\subsection{Definition of commutative depth}

Assume that we have a fixed generating set of quantum gates.
For the case of Clifford operations, a reasonable set of gates that generate them is 
$\{\mbox{\textsf{CNOT}},\mbox{\textsf{C}$Z$},\mbox{\textsf{C}$Y$}\}\cup\langle H,S\rangle$ (the set of all 1-qubit Clifford gates as well as controlled-Pauli gates).
Intuitively, performing one gate corresponds to a ``constant amount of work". 

\begin{defn}\label{def:commutative-depth}
Define a \emph{commuting layer} as a quantum circuit consisting gates from the generating set 
such that every pair of gates in the circuit commute.
Define a \emph{layered circuit} as a circuit that is a composition of commuting layers.
The \emph{commutative depth} of such a circuit is the number of commuting layers.
\end{defn}

The idea is that, when gates commute, their canonical Hamiltonians also commute, which implies that, in principle, all the gates in such a layer can be performed simultaneously in one step
(this is discussed in more detail in the first four paragraphs of section~\ref{sec:intro}).

\medskip

\noindent
\textbf{Notes:}\vspace*{-3mm}

\begin{myitemize}
    \item
    The \emph{standard depth} can be defined similarly to Definition~\ref{def:commutative-depth} with a more stringent definition of a layer: where all the gates in a layer are required to act on distinct qubits.
    \item
    We are considering \textit{in-place} circuits, that use no ancilla qubits. The computations are easier if one can employ many ancilla qubits, each initialized in state $\ket{0}$.
    \item
    The \emph{size} of a layered quantum circuit is defined as the total number of gates it contains.
\end{myitemize}



\subsection{Linear permutations}

\begin{defn}
    Define a \emph{linear permutation} on $n$ qubits as a unitary operation that permutes the computational basis states as
    \begin{align}
    \ket{x}\ \mapsto \ket{Mx},
    \end{align}
    for all $x \in \{0,1\}^n$,
    where $M$ is some invertible $n \times n$ binary matrix, and $Mx$ denotes left multiplication by $M$ of $x$ as a column vector (in mod $2$ arithmetic). 
\end{defn}

\noindent
Circuit notation for such a linear permutation is the following (where the wire denotes $n$ qubits):
\tikzset{operator/.append style={text height=1.7em}}
\begin{align}
    \tikzsetnextfilename{M}
    \begin{quantikz}    
    	& \gate{M} & 
    \end{quantikz}
\end{align}

For example, the Prefix Sum $P$ defined in Eq.~\eqref{eq:prefix-sum-intro} is a linear permutation with associated matrix
\begin{align}\label{eq:P}
    P = \begin{bmatrix}
        1 & 0 & 0 & \cdots & 0 \\
        1 & 1 & 0 & \cdots & 0\\
        1 & 1 & 1 & \cdots & 0\\
        \vdots & \vdots & \vdots & \ddots & \vdots \\
        1 & 1 & 1 & \cdots  & 1 \\
    \end{bmatrix}.
\end{align}

\subsection{Linear additions}

\begin{defn}
For an arbitrary binary $n \times n$ matrix $M$ (not necessarily invertible), the transformation on two $n$-qubit registers 
\begin{align}\label{eq:M-XOR} 
\ket{x}\ket{y} & \mapsto \ket{x}\ket{y+Mx}
\end{align}
is the linear permutation that corresponds to the $2n \times 2n$ matrix
\begin{align}
    \begin{bmatrix}
        I & 0 \\
        M & I
    \end{bmatrix}.
\end{align}
\end{defn}
\noindent
Our circuit notation for this is the following (where each wire denotes $n$ qubits):
\begin{align}
    \tikzsetnextfilename{M-XOR}
    \begin{quantikz}[row sep={11.5mm,between origins},align equals at=1.5]
    	& \gate{M} & \\
    	& \targ{}\wire[u][1]{q} &
    \end{quantikz}
\end{align}
\noindent
Note that this operation is implementable by one single commuting layer of $w$ \textsf{CNOT} gates, 
where $w = \mbox{weight}(M)$, which is the number of non-zero entries of $M$.
This is because a direct implementation of the circuit in terms of \textsf{CNOT} gates has one such gate for each non-zero entry of $M$ with control-qubit among the first $n$ qubits and target qubit among the last $n$ qubits---hence all these \textsf{CNOT} gates commute.

Also, note that two linear additions can be combined when their control and target qubits are aligned, as:
\begin{align}\label{fig:R+L}
    \tikzsetnextfilename{R+L}
    \begin{quantikz}[row sep={11.5mm,between origins},align equals at=1.5]
    	& \gate{A} & \gate{B} & \midstick[wires=2,brackets=none]{$\equiv$} & \gate{A+B} & \\
    	& \targ{}\wire[u][1]{q} & \targ{}\wire[u][1]{q} &  & \targ{}\wire[u][1]{q} &
    \end{quantikz} 
\end{align}

\subsection{The \texorpdfstring{$M\oplus M^{-1}$}{M and M inverse} mapping in constant commutative depth}

For any two invertible $n\times n$ matrices $A$ and $B$, we use the notation $A\oplus B$ 
to refer to the $2n$-qubit linear permutation corresponding to the direct sum of $A$ and $B$, namely
\begin{align}
 A \oplus B = 
    \begin{bmatrix}
        A & 0 \\
        0 & B
    \end{bmatrix}.
\end{align}

It is straightforward to verify that, for any invertible $n \times n$ binary matrix $M$, the $M \oplus M^{-1}$ and $M^{-1} \oplus M$ mappings are computed, respectively, by the circuits in Eq.~\eqref{fig:M-M}.

\begin{align}\label{fig:M-M}
	\begin{tabular}{cc}
		\tikzsetnextfilename{M-M-inv-new}
		\begin{quantikz}[row sep={11.5mm,between origins},align equals at=1.5,column sep=2mm]
			& \gate{M} & \targ{}\wire[d][1]{q} & \gate{M} & \permute{2,1} & \midstick[wires=2,brackets=none]{$\equiv$} & \gate{M} & \\
			& \targ{}\wire[u][1]{q} & \gate{M^{-1}} & \targ{}\wire[u][1]{q} & & & \gate{M^{-1}} &
		\end{quantikz} \ \ 
		& \ \ 
		\tikzsetnextfilename{M-inv-M-new}
		\begin{quantikz}[row sep={11.5mm,between origins},align equals at=1.5,column sep=2mm]
			& \permute{2,1} & \gate{M} & \targ{}\wire[d][1]{q} & \gate{M} & \midstick[wires=2,brackets=none]{$\equiv$} & \gate{M^{-1}} & \\
			& & \targ{}\wire[u][1]{q} & \gate{M^{-1}} & \targ{}\wire[u][1]{q} & & \gate{M} &
		\end{quantikz}
		\\
		(a) & (b) \\
	\end{tabular}
\end{align}

\noindent
The commutative depth of these circuits is 3 if we do not count the swap gates (and it turns out that, in our construction, there are swap gates that cancel out and therefore can be eliminated).

\section{New results for the Prefix Sum problem}\label{sec:prefix-sums}

The prefix sum problem corresponds to the unitary $P$ as defined in Eq.~\eqref{eq:prefix-sum-intro}.
As explained in section~\ref{sec:intro}, previous methods for computing this in place either have (standard) depth $\Theta(\log n)$ and area $\Theta(n \log n)$ or commutative depth constant and size $\Theta(n^2)$ (which is also the area).

In this section, we prove Theorem~\ref{thm:prefix-sum}, which states that constant commutative depth can be attained while preserving size (and weighted area) $\Theta(n \log n)$.



\subsection{Structural symmetries of the Ladner-Fischer circuit for prefix sums}

The starting point of our construction is the elegant parallel algorithm for computing prefix sums due to Ladner and Fischer \cite{LadnerFischer1980}, which shows how to compute the mapping 
\begin{align}
    (x_1, x_2, \dots, x_n) \mapsto (x_1,\, x_1 \!\circ\! x_2,\, \dots,\, x_1\!\circ \cdots \circ\!x_n),
\end{align}
with respect to any associative operation $\circ$ on some domain, with $O(n)$ $\circ$ gates in depth $O(\log n)$.
A special case of this is where the domain is $\{0,1\}$ and the binary operation is addition modulo 2.

When $n=16$ the Ladner-Fischer circuit looks like this:
\begin{align}\label{fig:LF-circuit}
    \tikzsetnextfilename{LadnerFischer1980}
    \begin{tikzpicture}
    \node[scale=0.57] {
    	 \begin{quantikz}[row sep={6mm,between origins},column sep=3mm,align equals at=8]
    		& \ctrl{1} & & & & & & & \\
    		& \targ{} & \ctrl{2} & & & & & \ctrl{1} & \\
    		& \ctrl{1} & & & & & & \targ{} & \\
    		& \targ{} & \targ{} & \ctrl{4} & & & \ctrl{2} & \ctrl{1} & \\
    		& \ctrl{1} & & & & & & \targ{} & \\
    		& \targ{} & \ctrl{2} & & & & \targ{} & \ctrl{1} & \\
    		& \ctrl{1} & & & & & & \targ{} & \\
    		& \targ{} & \targ{} & \targ{} & \ctrl{8} & \ctrl{4} & \ctrl{2} & \ctrl{1} & \\
    		& \ctrl{1} & & & & & & \targ{} & \\
    		& \targ{} & \ctrl{2} & & & & \targ{} & \ctrl{1} & \\
    		& \ctrl{1} & & & & & & \targ{} & \\
    		& \targ{} & \targ{} & \ctrl{4} & & \targ{} & \ctrl{2} & \ctrl{1} & \\
    		& \ctrl{1} & & & & & & \targ{} & \\
    		& \targ{} & \ctrl{2} & & & & \targ{} & \ctrl{1} & \\
    		& \ctrl{1} & & & & & & \targ{} & \\
    		& \targ{} & \targ{} & \targ{} & \targ{}& & & & 
    	\end{quantikz}
    };
    \end{tikzpicture}
\end{align}

\noindent
More generally, for $n = 2^k$, the Ladner-Fischer circuit begins with \textsf{CNOT} gates arranged as a binary tree of depth $k$ rooted at the last qubit, followed by parallel composition of binary trees of \textsf{CNOT} gates oriented a different way with depths $k-1, k-2, \dots, 1$.
These circuits compute the prefix sums in standard depth $O(\log n)$ and size $O(n)$, which is already a significant improvement over the circuit in Eq.~\eqref{circuit:staircase}.
However, our goal is to compute the prefix sums in \emph{constant commutative depth}, and this construction does not achieve that (no matter how one partitions the gates into commuting layers).

We obtain our constant-depth circuit construction by leveraging properties of the left and right parts of the circuit in Eq.~\eqref{fig:LF-circuit}.
To make the construction work cleanly, we \emph{prune}%
\footnote{\emph{Pruning} a qubit from a circuit, means removing that qubit as well as all gates incident with that qubit.}
the last qubit from the circuit, resulting a circuit on $2^k-1$ qubits of the form of circuit (a) in Eq.~\eqref{fig:LF-L-R}:
\begin{equation}\label{fig:LF-L-R}
\begin{tabularx}{.9\textwidth}{YYY}
	\tikzsetnextfilename{PrunedLF}
	\begin{tikzpicture}
		\node[scale=0.57] {
			\begin{quantikz}[row sep={6mm,between origins},column sep=3mm,align equals at=8]
				& \ctrl{1} & \phantomgate{;} & & & & & & \\
				& \targ{} & \ctrl{2} & & & & & \ctrl{1} & \\
				& \ctrl{1} & & & & & & \targ{} & \\
				& \targ{} & \targ{} & \ctrl{4} & & & \ctrl{2} & \ctrl{1} & \\
				& \ctrl{1} & & & & & & \targ{} & \\
				& \targ{} & \ctrl{2} & & & & \targ{} & \ctrl{1} & \\
				& \ctrl{1} & & & & & & \targ{} & \\
				& \targ{} & \targ{} & \targ{} & & \ctrl{4} & \ctrl{2} & \ctrl{1} & \\
				& \ctrl{1} & & & & & & \targ{} & \\
				& \targ{} & \ctrl{2} & & & & \targ{} & \ctrl{1} & \\
				& \ctrl{1} & & & & & & \targ{} & \\
				& \targ{} & \targ{} & & & \targ{} & \ctrl{2} & \ctrl{1} & \\
				& \ctrl{1} & & & & & & \targ{} & \\
				& \targ{} & & & & & \targ{} & \ctrl{1} & \\
				& \phantomgate{;} & & & & & & \targ{} & 
			\end{quantikz}
		};
	\end{tikzpicture} & 
	\tikzsetnextfilename{L}
	\begin{tikzpicture}
		\node[scale=0.57,anchor=base] {
			\begin{quantikz}[row sep={6mm,between origins},column sep=3mm,align equals at=1]
				& \ctrl{1} & &\phantomgate{;} & \\
				& \targ{} & \ctrl{2} & & \\
				& \ctrl{1} & & & \\
				& \targ{} & \targ{} & \ctrl{4} & \\
				& \ctrl{1} & & & \\
				& \targ{} & \ctrl{2} & & \\
				& \ctrl{1} & & & \\
				& \targ{} & \targ{} & \targ{} & \\
				& \ctrl{1} & & & \\
				& \targ{} & \ctrl{2} & & \\
				& \ctrl{1} & & & \\
				& \targ{} & \targ{} & & \\
				& \ctrl{1} & & & \\
				& \targ{} & & & \\
				& \phantomgate{;} & & & 
			\end{quantikz}
		};
	\end{tikzpicture} &
	\tikzsetnextfilename{R}
	\begin{tikzpicture}
		\node[scale=0.57,anchor=base] {
			\begin{quantikz}[row sep={6mm,between origins},column sep=3mm,align equals at=1]
				& & & \phantomgate{;} & \\
				& & & \ctrl{1} & \\
				& & & \targ{} & \\
				& & \ctrl{2} & \ctrl{1} & \\
				& & & \targ{} & \\
				& & \targ{} & \ctrl{1} & \\
				& & & \targ{} & \\
				& \ctrl{4} & \ctrl{2} & \ctrl{1} & \\
				& & & \targ{} & \\
				& & \targ{} & \ctrl{1} & \\
				& & & \targ{} & \\
				& \targ{} & \ctrl{2} & \ctrl{1} & \\
				& & & \targ{} & \\
				& & \targ{} & \ctrl{1} & \\
				& \phantomgate{;} & & \targ{} & 
			\end{quantikz}
		};
	\end{tikzpicture} \\
    (a) Pruned Ladner-Fischer & (b) $L$ & (c) $R$ 
\end{tabularx}
\end{equation}
\noindent
It is straightforward to check that this circuit correctly computes the parallel prefixes for any $n = 2^k -1$ qubits.
And this circuit is the composition of the circuits $L$ and $R$, shown respectively in parts (b) and (c) of Eq.~\eqref{fig:LF-L-R}, where the symmetry between $L$ and $R$ is easy to visualize.

A circuit for $L^{-1}$ consists of the gates of $L$ in reverse order.
What is the relationship between $L^{-1}$ and $R$?
$R$ is like an upside-down version of $L^{-1}$, with the additional change that each \textsf{CNOT} gate is inverted (in the sense of the control and target qubits being swapped; which is achieved by conjugating each qubit with $H$ gates).

To define the upside-down version of a circuit, let $B$ be the unitary operation that outputs the input qubits in backwards order (i.e., maps $\ket{b_1, b_2, \cdots, b_{n-1}, b_n}$ to $\ket{b_n, b_{n-1}, \cdots, b_2, b_1}$).
Then $BMB$ is the matrix associated with the upside-down version of a circuit for $M$, and the relationship between $L^{-1}$ and $R$ is illustrated in Eq.~\eqref{fig:3-circuits} (for $n=15$):

\begin{equation}\label{fig:3-circuits}
\begin{tabularx}{.9\textwidth}{YYY}
    \tikzsetnextfilename{Linv}
	\begin{tikzpicture}
		\node[scale=0.57,anchor=base] {
			\begin{quantikz}[row sep={6mm,between origins},column sep=3mm,align equals at=1]
				& & \phantomgate{;} & \ctrl{1} & \\
				& & \ctrl{2} & \targ{} & \\
				& & & \ctrl{1} & \\
				& \ctrl{4} & \targ{} & \targ{} & \\
				& & & \ctrl{1} & \\
				& & \ctrl{2} & \targ{} & \\
				& & & \ctrl{1} & \\
				& \targ{} & \targ{} & \targ{} & \\
				& & & \ctrl{1} & \\
				& & \ctrl{2} & \targ{} & \\
				& & & \ctrl{1} & \\
				& & \targ{} & \targ{} & \\
				& & & \ctrl{1} & \\
				& & & \targ{} & \\
				& \phantomgate{;} & & & 
			\end{quantikz}
		};
	\end{tikzpicture} &
	\tikzsetnextfilename{BLinvB}
	\begin{tikzpicture}
		\node[scale=0.57,anchor=base] {
			\begin{quantikz}[row sep={6mm,between origins},column sep=3mm,align equals at=1]
				& & & \phantomgate{;} & \\
				& & & \targ{} & \\
				& & & \ctrl{-1} & \\
				& & \targ{} & \targ{} & \\
				& & & \ctrl{-1} & \\
				& & \ctrl{-2} & \targ{} & \\
				& & & \ctrl{-1} & \\
				& \targ{} & \targ{} & \targ{} & \\
				& & & \ctrl{-1} & \\
				& & \ctrl{-2} & \targ{} & \\
				& & & \ctrl{-1} & \\
				& \ctrl{-4} & \targ{} & \targ{} & \\
				& & & \ctrl{-1} & \\
				& & \ctrl{-2} & \targ{} & \\
				& \phantomgate{;} & & \ctrl{-1} & 
			\end{quantikz}
		};
	\end{tikzpicture} &
	\tikzsetnextfilename{R-2}
	\begin{tikzpicture}
		\node[scale=0.57,anchor=base] {
			\begin{quantikz}[row sep={6mm,between origins},column sep=3mm,align equals at=1]
				& & & \phantomgate{;} & \\
				& & & \ctrl{1} & \\
				& & & \targ{} & \\
				& & \ctrl{2} & \ctrl{1} & \\
				& & & \targ{} & \\
				& & \targ{} & \ctrl{1} & \\
				& & & \targ{} & \\
				& \ctrl{4} & \ctrl{2} & \ctrl{1} & \\
				& & & \targ{} & \\
				& & \targ{} & \ctrl{1} & \\
				& & & \targ{} & \\
				& \targ{} & \ctrl{2} & \ctrl{1} & \\
				& & & \targ{} & \\
				& & \targ{} & \ctrl{1} & \\
				& \phantomgate{;} & & \targ{} & 
			\end{quantikz}
		};
	\end{tikzpicture} \\
    (a) $L^{-1}$ & (b) $BL^{-1}B$ & (c) $H^{\otimes n} BL^{-1}BH^{\otimes n} = R$   
\end{tabularx}
\end{equation}
\noindent
In summary, we have the following.
\begin{lemma}
    For $n = 2^k -1$, the following relationships between $L$ and $R$ hold:
\begin{align}
    H^{\otimes n}BL^{-1}BH^{\otimes n} &= R \label{eq:R-in-terms-of-L}\\
    H^{\otimes n}BR^{-1}BH^{\otimes n} &= L.\label{eq:L-in-terms-of-R}
\end{align}
\end{lemma}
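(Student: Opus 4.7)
The plan is to verify \eqref{eq:R-in-terms-of-L} gate-by-gate using three elementary facts, and then derive \eqref{eq:L-in-terms-of-R} as a corollary via an involution argument. The three facts are: (i) each gate in $L$ is a \textsf{CNOT}, which is self-inverse, so $L^{-1}$ is the same list of gates read in reversed temporal order; (ii) conjugation by $B$ relabels every \textsf{CNOT} with control $c$ and target $t$ as a \textsf{CNOT} with control $n{+}1{-}c$ and target $n{+}1{-}t$, since $B$ merely permutes qubit labels; and (iii) conjugation by $H^{\otimes n}$ swaps the control and target of every \textsf{CNOT}, by the identities $HXH=Z$ and $HZH=X$. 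I also use that $B$ commutes with $H^{\otimes n}$ (since $H^{\otimes n}$ applies the same single-qubit gate to every qubit) and that $B^2 = (H^{\otimes n})^2 = I$.

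Next I index the layers by tree level. For $n = 2^k-1$, write $L = L_{k-1}\cdots L_2 L_1$, where $L_\ell$ contains the \textsf{CNOT}s with control $2^\ell j - 2^{\ell-1}$ and target $2^\ell j$ for $j = 1, \dots, 2^{k-\ell}-1$, read directly from \eqref{fig:LF-L-R}(b); pruning qubit $2^k$ accounts for the missing boundary gates. Symmetrically, $R = R_1 R_2 \cdots R_{k-1}$, where $R_\ell$ contains \textsf{CNOT}s with control $2^\ell j$ and target $2^\ell j + 2^{\ell-1}$ over the same index range. Applying fact (i) gives $L^{-1} = L_1 L_2 \cdots L_{k-1}$, so $H^{\otimes n} B L^{-1} B H^{\otimes n}$ factors in that order as the product of $H^{\otimes n} B L_\ell B H^{\otimes n}$. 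By (ii), $B L_\ell B$ consists of \textsf{CNOT}s with control $2^k - 2^\ell j + 2^{\ell-1}$ and target $2^k - 2^\ell j$, and substituting $j' = 2^{k-\ell} - j$ rewrites this as control $2^\ell j' + 2^{\ell-1}$ and target $2^\ell j'$ over the same set of indices. By (iii), $H^{\otimes n}$ then swaps control and target, yielding control $2^\ell j'$ and target $2^\ell j' + 2^{\ell-1}$, which is exactly the gate list of $R_\ell$. Composing in order gives $R_1 R_2 \cdots R_{k-1} = R$, proving \eqref{eq:R-in-terms-of-L}.

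For \eqref{eq:L-in-terms-of-R}, let $\phi(X) := H^{\otimes n} B X^{-1} B H^{\otimes n}$. A short computation using the commutation of $B$ and $H^{\otimes n}$ together with $B^2 = (H^{\otimes n})^2 = I$ shows that $\phi$ is an involution, so $\phi(L) = R$ immediately forces $\phi(R) = L$. The main obstacle in this plan is the bookkeeping in the middle paragraph: one must verify that the re-indexing $j \mapsto 2^{k-\ell} - j$ traverses exactly the correct range, and that the temporal ordering (the layers of $L^{-1}$ running from tree level $1$ outward to $k-1$, and the layers of $R$ running in the same order) aligns the $\ell$-th factor on each side rather than producing a mismatched or reversed product.
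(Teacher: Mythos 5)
Your proof is correct and follows essentially the same approach the paper sketches informally around Eq.~\eqref{fig:3-circuits}: conjugation by $B$ reflects the circuit top-to-bottom, conjugation by $H^{\otimes n}$ swaps every control and target, and inversion reverses the temporal order of the layers. The paper presents this as a picture for $n=15$ together with prose, whereas you carry the indexing through for general $k$ (control $2^\ell j - 2^{\ell-1}$, target $2^\ell j$, re-indexed by $j' = 2^{k-\ell}-j$) and close with the clean involution observation $\phi^2 = \mathrm{id}$ to get \eqref{eq:L-in-terms-of-R} for free; yours is the more complete version of the same argument, and the bookkeeping you flag as the main obstacle does in fact check out.
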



Our construction will use $L$, $R$, $L^{-1}$, and $R^{-1}$, and is gate-efficient because the these matrices have weight (i.e., density of 1s) close to linear.
In Appendix~\ref{appendix:A}, we prove the following.
\begin{lemma}
    The weight of the matrices corresponding to $L$ and $R$ are both $O(n \log n)$.
    The weight of the matrices corresponding to $L^{-1}$, and $R^{-1}$ are both $O(n)$.
\end{lemma}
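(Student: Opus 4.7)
My approach is to analyze $L$ and $L^{-1}$ directly from the up-sweep tree structure of the Ladner-Fischer circuit, and then transfer the resulting bounds to $R$ and $R^{-1}$ via the symmetry relations~\eqref{eq:R-in-terms-of-L}--\eqref{eq:L-in-terms-of-R}. Let $\nu_2(i)$ denote the 2-adic valuation of $i$, and write $i = 2^{\nu_2(i)} m$ with $m$ odd. An induction on the level of the up-sweep tree shows that, after $L$, qubit $i$ holds the XOR $b_{i - 2^{\nu_2(i)} + 1} + \cdots + b_i$, so row $i$ of the matrix $L$ has weight exactly $2^{\nu_2(i)}$. Since $\{1, \ldots, n\}$ contains $\Theta(n/2^{j+1})$ indices with $\nu_2 = j$ for each $0 \le j < k := \lceil \log_2(n+1) \rceil$, summing gives
\begin{align*}
    \mbox{weight}(L) = \sum_{i=1}^{n} 2^{\nu_2(i)} = \Theta\!\left( \sum_{j=0}^{k-1} \frac{n}{2^{j+1}} \cdot 2^j \right) = \Theta(n \log n).
\end{align*}

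For $L^{-1}$, I would prove by induction on $j := \nu_2(i)$ that row $i$ has weight exactly $j+1$; concretely,
\begin{align*}
    b_i = c_i + c_{i - 2^{j-1}} + c_{i - 2^{j-2}} + \cdots + c_{i-1},
\end{align*}
where $c_1, \ldots, c_n$ denote the outputs of $L$ on inputs $b_1, \ldots, b_n$. The inductive step begins from $b_i = c_i + b_{i - 2^j + 1} + \cdots + b_{i-1}$, substitutes the inductive expression for each $b_l$ (all with strictly smaller $\nu_2$), and observes that the resulting terms organize into dyadic levels with pairwise cancellations, leaving exactly one surviving $c$ per level. Summing and applying Legendre's identity $\sum_{i=1}^n \nu_2(i) = O(n)$ yields $\mbox{weight}(L^{-1}) = \sum_{i=1}^{n}(\nu_2(i) + 1) = O(n)$.

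For $R$ and $R^{-1}$, I would translate Eq.~\eqref{eq:R-in-terms-of-L} into an identity of binary matrices. A direct calculation shows that, for any invertible linear permutation $U_M$, conjugation by $H^{\otimes n}$ produces $U_{(M^T)^{-1}}$ and conjugation by $B$ produces $U_{JMJ}$, where $J$ is the reverse-order permutation matrix. Composing, $R$ has matrix $J L^T J$ and $R^{-1}$ has matrix $J L^{-T} J$. Since both matrix transposition and conjugation by the permutation $J$ preserve the number of nonzero entries, the bounds $\mbox{weight}(R) = \mbox{weight}(L) = O(n \log n)$ and $\mbox{weight}(R^{-1}) = \mbox{weight}(L^{-1}) = O(n)$ follow immediately.

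The main obstacle is the $L^{-1}$ induction: one must verify that the nested substitutions produce exactly the claimed telescoping cancellations, so that the weight is $\nu_2(i) + 1$ rather than merely $O(\log n)$. The key structural observation---that within each dyadic block the $c$-expansions of the lower $b_l$'s pair up symmetrically, annihilating all but one term per level---can be formalized by tracking the expressions inductively, level by level.
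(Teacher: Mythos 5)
Your proposal is correct, but for bounding $\mbox{weight}(L)$ and $\mbox{weight}(L^{-1})$ it takes a genuinely different route from the paper. The paper observes a block-recursive structure on the matrices $L_{2n+1}$ and $L^{-1}_{2n+1}$ (each containing two copies of $L_n$ or $L^{-1}_n$ plus one extra row of weight $n+1$ or $O(\log n)$, respectively), and then solves the resulting recurrences $d_{2n+1} = 2d_n + n+1$ and $\tilde d_{2n+1} = 2\tilde d_n + O(\log n)$. You instead give closed-form row weights: row $i$ of $L$ has weight $2^{\nu_2(i)}$ and row $i$ of $L^{-1}$ has weight $\nu_2(i)+1$, then sum. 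Both are valid; the paper's recurrence is a little cleaner to state, while your per-row formulas are more explicit (and give the exact weights, not just asymptotics). For $R$ and $R^{-1}$ you arrive at the same transfer principle the paper uses: $H^{\otimes n}$-conjugation gives the inverse-transpose and $B$-conjugation gives $J(\cdot)J$, so $R = JL^T J$ is the anti-transpose of $L$, which preserves weight.

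One remark on the step you flag as the ``main obstacle'': the formula $b_i = c_i + \sum_{l=0}^{j-1} c_{i-2^l}$ (with $j=\nu_2(i)$) does not actually require a delicate nested-cancellation induction. Since $L$ sends $b \mapsto c$ with $c_i = b_{i-2^{\nu_2(i)}+1} + \cdots + b_i$, and since $\nu_2(i-2^l) = l$ for every $0 \le l < j$, one has $c_{i-2^l} = b_{i-2^{l+1}+1} + \cdots + b_{i-2^l}$. These intervals for $l = 0, \dots, j-1$ tile $\{i-2^j+1, \dots, i-1\}$ exactly once, so adding them to $c_i$ mod $2$ cancels everything except $b_i$. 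This direct verification avoids the telescoping bookkeeping you were worried about and makes the $O(n)$ bound on $\mbox{weight}(L^{-1})$ immediate via $\sum_{i\le n}\nu_2(i) \le n$.
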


\subsection{The \texorpdfstring{$P \oplus P$}{PP} mapping in commutative depth 15}

For $P$ defined in Eq.~\eqref{eq:P}, using the fact that $P = RL$, 
we can compute the linear permutation $P \oplus I$ by the circuit in Eq.~\eqref{fig:P-I}:
\begin{align}\label{fig:P-I}
\tikzsetnextfilename{P-I-new}
\begin{matrix}
\begin{quantikz}[row sep={11.5mm,between origins},align equals at=1.5,column sep=2.1mm]
	& & \gate{L}\gategroup[2,steps=4,style={dashed,rounded
		corners,draw=red!50,inner xsep=1pt},background]{} & \targ{}\wire[d][1]{q} & \gate{L} & \permute{2,1} & & \gate{H} & \gate{B} & & \permute{2,1}\gategroup[2,steps=4,style={dashed,rounded
		corners,draw=red!50,inner xsep=1pt},background]{} & \gate{L} & \targ{}\wire[d][1]{q} & \gate{L} & & \gate{B} & \gate{H} & \midstick[wires=2,brackets=none]{$\equiv$} & \gate{P} & \\
	& & \targ{}\wire[u][1]{q} & \gate{L^{-1}} & \targ{}\wire[u][1]{q} & & & & & & &\targ{}\wire[u][1]{q} & \gate{L^{-1}} & \targ{}\wire[u][1]{q} & & & & & &
\end{quantikz}
\end{matrix}
\end{align}

\noindent
This works because, due to Eq.~\eqref{fig:M-M}, it is equivalent to the circuit on the left side of Eq.~\eqref{fig:P-I-detail}, which reduces to $(RL)\oplus I$ by Eq.~\eqref{eq:R-in-terms-of-L}.
\begin{align}\label{fig:P-I-detail}
    \tikzsetnextfilename{P-I-detail}
    \begin{matrix}
    	\begin{quantikz}[row sep={11.5mm,between origins},align equals at=1.5,column sep=2.1mm]
    		& & \gate{L}\gategroup[2,steps=1,style={dashed,rounded
    			corners,draw=red!50,inner xsep=1pt},background]{} & & \gate{H} & \gate{B} & & \gate{L^{-1}}\gategroup[2,steps=1,style={dashed,rounded
    			corners,draw=red!50,inner xsep=1pt},background]{} & & \gate{B} & \gate{H} & \midstick[wires=2,brackets=none]{$\equiv$} & \gate{L} & \gate{R} & \midstick[wires=2,brackets=none]{$\equiv$} & \gate{P} & \\
    		& & \gate{L^{-1}} & & & & &\gate{L} & & & & & & & & &
    	\end{quantikz}
    \end{matrix}
\end{align}

To compute $I\oplus P$, we use a variant of the above construction, shown in Eq.~\eqref{fig:I-P}:
\begin{align}\label{fig:I-P}
    \tikzsetnextfilename{I-P-new}
    \begin{matrix}
    \begin{quantikz}[row sep={11.5mm,between origins},align equals at=1.5,column sep=2.1mm]
    	& & & & \gate{R}\gategroup[2,steps=4,style={dashed,rounded
    		corners,draw=red!50,inner xsep=1pt},background]{} & \targ{}\wire[d][1]{q} & \gate{R} & \permute{2,1} & & & & & \permute{2,1}\gategroup[2,steps=4,style={dashed,rounded
    		corners,draw=red!50,inner xsep=1pt},background]{} & \gate{R} & \targ{}\wire[d][1]{q} & \gate{R} & & \midstick[wires=2,brackets=none]{$\equiv$} & & \\
    	& \gate{H} & \gate{B} & & \targ{}\wire[u][1]{q} & \gate{R^{-1}} & \targ{}\wire[u][1]{q} & & & \gate{B} & \gate{H} & & &\targ{}\wire[u][1]{q} & \gate{R^{-1}} & \targ{}\wire[u][1]{q} & & & \gate{P} &
    \end{quantikz}
    \end{matrix}
\end{align}

\noindent
This works because, due to Eq.~\eqref{fig:M-M}, it is equivalent to the circuit on the left side of Eq.~\eqref{fig:I-P-detail}, which reduces to $I\oplus (RL)$ by Eq.~\eqref{eq:L-in-terms-of-R}.
\begin{align}\label{fig:I-P-detail}
    \tikzsetnextfilename{I-P-detail}
    \begin{matrix}
    \begin{quantikz}[row sep={11.5mm,between origins},align equals at=1.5,column sep=2.1mm]
    	& & & & \gate{R}\gategroup[2,steps=1,style={dashed,rounded
    		corners,draw=red!50,inner xsep=1pt},background]{} & & & & & \gate{R^{-1}}\gategroup[2,steps=1,style={dashed,rounded
    		corners,draw=red!50,inner xsep=1pt},background]{} & & & \midstick[wires=2,brackets=none]{$\equiv$} & & & \midstick[wires=2,brackets=none]{$\equiv$} & & \\
    	& \gate{H} & \gate{B} & & \gate{R^{-1}} & & \gate{B} & \gate{H} & & \gate{R} & & & & \gate{L} & \gate{R} & & \gate{P} &
    \end{quantikz}
    \end{matrix}
\end{align}

We combine the circuits in Eqns.~\eqref{fig:P-I}~and~\eqref{fig:I-P} to compute $P \oplus P$ as in Eq.~\eqref{fig:simpler-P-P}:
\begin{align}\label{fig:simpler-P-P}
    \tikzsetnextfilename{P-P-new}
    \begin{matrix}
   	\begin{tikzpicture}
   	\node[scale=0.82]{
    \begin{quantikz}[row sep={11.5mm,between origins},align equals at=1.5,column sep=1.5mm]
    	& & & & \gate{R} & \targ{}\wire[d][1]{q} & \gate{R} & \gate{B} & \gate{H} & \gate{R} & \targ{}\wire[d][1]{q} & \gate{R+L} & \targ{}\wire[d][1]{q} & \gate{L} & & & \gate{L} & \targ{}\wire[d][1]{q} & \gate{L} & \gate{B} & \gate{H} & & \midstick[2,brackets=none]{$\equiv$} & & \gate{P} & & \\
    	& & \gate{H} & \gate{B} & \targ{}\wire[u][1]{q} & \gate{R^{-1}} & \targ{}\wire[u][1]{q} & & & \targ{}\wire[u][1]{q} & \gate{R^{-1}} & \targ{}\wire[u][1]{q} & \gate{L^{-1}} & \targ{}\wire[u][1]{q} & \gate{H} & \gate{B} & \targ{}\wire[u][1]{q} & \gate{L^{-1}} & \targ{}\wire[u][1]{q} & & & & & & \gate{P} & &
    \end{quantikz}
	};
    \end{tikzpicture}
    \end{matrix}
\end{align}

Note some simplifications in the circuit of Eq.~\eqref{fig:simpler-P-P}.
First, we can eliminate the two pairs of swap gates by moving $H$ and $B$ gates to different wires.
Second, we use the fact illustrated in Eq.~\eqref{fig:R+L} to save one layer of commutative depth.

Finally, we can remove the pair of $B$ gates acting of each of the two $n$-qubit registers, by moving around the control qubits and the target qubits of each \textsf{CNOT} gate.
The result is a circuit of commutative depth 15 for $P \oplus P$.

\subsection{Prefix sums in commutative depth 16 and size \texorpdfstring{$O(n \log n)$}{O(n log n)}}

From the previous section, we can compute $P \oplus P$ on two $n$-qubit registers for any $n = 2^k -1$ in commutative depth $15$ .
By adding one more layer of commuting \textsf{CNOT} gates, we can compute $P$ for any $n = 2(2^k-1)$ by the circuit construction in Eq.~\eqref{fig:P-final-step}:
\begin{align}\label{fig:P-final-step}
    \tikzsetnextfilename{P-final-step}
    \begin{matrix}
     \begin{tikzpicture}
    \node[scale=0.7] {
    \begin{quantikz}[row sep={6mm,between origins},column sep=2mm]
    	& & \gate[4]{P_n} & & & & & \midstick[9,brackets=none]{$\equiv$} & & \gate[9]{P_{2n}} & & \\
    	& & & & & & & & & & & \\
    	\setwiretype{n} & \vdots & & & & & & & \vdots & & \vdots & \\
    	& & & \ctrl{2} & \ctrl{3} & & \ctrl{5} & & & & & \\
        \setwiretype{n} & & & & & & & & & & & & \\
    	& & \gate[4]{P_n} & \targ{} & & & & & & & & \\
    	& & & & \targ{} & & & & & & & \\
    	\setwiretype{n} & \vdots & & & & \ddots & & & \vdots & & \vdots & \\
    	& & & & & & \targ{} & & & & & 
    \end{quantikz}
    };
    \end{tikzpicture}
    \end{matrix}
\end{align}

The overall approach can be extended to the case of computing prefix sums for all even $n$, in appendix~\ref{appendix:B}.


\section{Circuits for arbitrary linear and Clifford operations}

    In this section, we show that, for any even number (denoted as $2n$) of qubits, an arbitrary linear operation has a circuit of commutative depth $11$ (\Cref{cor:main_linear}), and an arbitrary Clifford operation has commutative depth $16$ (\Cref{cor:main_clifford}). Our result are an improvement over \cite{BravyiMaslov+2022}, which achieves commutative depth $18$ for linear operations and $23$ for Clifford operations.
    \subsection{Linear operations}
    We present an implementation of arbitrary linear operations with commutative depth $11$. We begin by arguing that the top left submatrix can be made invertible.
    The following was essentially proven by Hasegawa and Hayashi~\cite{HasegawaH2025}.
    \begin{lemma} 
        \label{lem:invertible_upper_block_2}
        Let $M$ be an $n \times n$ invertible binary matrix and $m < n$.
        Let $A'$ be the principal $m \times m$ submatrix of $M$ in the sense that $M = [\begin{smallmatrix} A' & B' \\ C' & D' \end{smallmatrix}]$.
        Then there exists an $(n-m) \times m$ matrix $X$ such that
        \begin{align}
            \begin{bmatrix}
                A^{'} & B^{'} \\
                C^{'} & D^{'}
            \end{bmatrix}
            \begin{bmatrix}
                I & 0 \\
                X & I
            \end{bmatrix}
            =\begin{bmatrix}
                A & B \\
                C & D
            \end{bmatrix},
        \end{align}
        where $A = A^{'}+B^{'}X$ is invertible.
    \end{lemma}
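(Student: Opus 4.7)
The plan is to carry out the matrix multiplication and translate the question into one about subspaces. Computing the product gives
\[
\begin{bmatrix} A' & B' \\ C' & D' \end{bmatrix}\begin{bmatrix} I & 0 \\ X & I \end{bmatrix} = \begin{bmatrix} A' + B'X & B' \\ C' + D'X & D' \end{bmatrix},
\]
so the top-left block is $A = A' + B'X = [A' \mid B']\begin{bmatrix} I \\ X \end{bmatrix}$. Setting $\phi := [A' \mid B'] : \mathbb{F}_2^n \to \mathbb{F}_2^m$, invertibility of $A$ is equivalent to $\phi$ being injective on the $m$-dimensional graph subspace $V_X := \{(v, Xv) : v \in \mathbb{F}_2^m\}$, i.e., $V_X \cap \ker\phi = \{0\}$. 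Since $M$ is invertible, its top $m$ rows have rank $m$, so $\phi$ is surjective and $\dim \ker\phi = n-m$. The lemma thus reduces to exhibiting an $X$ for which $V_X$ is a common complement of $E_2 := \{0\}\oplus\mathbb{F}_2^{n-m}$ (the graph condition) and of $\ker\phi$.

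I would construct such an $X$ explicitly. Let $c_i := \phi(e_i)$ be the $i$th column of $[A'\mid B']$; the $\{c_i\}$ span $\mathbb{F}_2^m$ by surjectivity. Set $W := \operatorname{span}\{c_{m+1}, \ldots, c_n\}$ and pick $T \subseteq \{1, \ldots, m\}$ so that $\{c_i + W : i \in T\}$ is a basis of $\mathbb{F}_2^m / W$ (such a $T$ exists because the images of $c_1, \ldots, c_m$ generate the quotient). A short dimension count forces $U := \operatorname{span}\{c_i : i \in T\}$ and $W$ to be complementary in $\mathbb{F}_2^m$. Take $X_{\cdot, i} = 0$ for $i \in T$; for each of the remaining $m - |T| = \dim W$ indices $i \in \{1, \ldots, m\} \setminus T$, choose $X_{\cdot, i}$ so that $c_i' := c_i + \sum_j X_{j,i}\, c_{m+j}$ reduces modulo $U$ to a prescribed basis element of $\mathbb{F}_2^m / U \cong W$. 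This is possible since the correction $\sum_j X_{j,i}\, c_{m+j}$ ranges freely over $W$. Then $\{c_i\}_{i \in T} \cup \{c_i'\}_{i \notin T}$ is a basis of $\mathbb{F}_2^m$, so $A = A' + B'X$ is invertible.

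The only nontrivial step will be verifying that $U \oplus W = \mathbb{F}_2^m$, which boils down to a short count: $\dim U = |T| = m - \dim W$ by the construction of $T$, and $U + W = \mathbb{F}_2^m$ because the chosen classes span the quotient. A cleaner nonconstructive alternative is to invoke the standard fact that two subspaces of equal codimension in a finite-dimensional vector space admit a common complement, applied here to $E_2$ and $\ker\phi$: any such common complement is automatically a graph $V_X$, which yields the desired $X$ with no further work. Either route produces a proof of only a few lines.
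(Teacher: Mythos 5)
Your proof is correct and takes a genuinely different, and arguably cleaner, route than the paper's. The paper gives a direct column-manipulation argument: pick a basis of the column space of $[A'\mid B']$ consisting of some columns of $A'$ and some of $B'$, then define $X$ so that $B'X$ pads each remaining (dependent) column of $A'$ with one of the chosen independent columns of $B'$. You instead reformulate the question geometrically: writing $\phi = [A'\mid B']$, invertibility of $A = A'+B'X$ means $V_X := \{(v,Xv)\}$ meets $\ker\phi$ trivially, and $V_X$ ranges exactly over the complements of $E_2 := \{0\}\oplus\mathbb{F}_2^{n-m}$; since $E_2$ and $\ker\phi$ have the same dimension $n-m$ (the latter because invertibility of $M$ makes $\phi$ surjective), the standard ``common complement'' fact hands you the desired $X$ in one stroke. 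This abstraction buys robustness: the paper's construction implicitly requires the independent set drawn from $A'$ to be \emph{maximal} among $A'$'s columns --- otherwise the added $B'$-column can cancel against an existing dependency and produce a singular $A$ --- whereas your common-complement formulation has no such hidden choice. Your explicit construction (via the quotient $\mathbb{F}_2^m/W$ with $W = \operatorname{col}(B')$) is correct as well, though it is longer than the paper's; the nonconstructive variant you suggest is the more appealing presentation, and in my view is the preferable proof of the lemma.
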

    \begin{proof} 
        $\begin{bmatrix}
                A^{'} & B^{'}
            \end{bmatrix}$
        has $m$ linear independent columns. Let $a_1^{'},\ldots,a_k^{'},b_1^{'},\ldots,b_{m-k}^{'}$ be the column numbers corresponding to linear independent columns, $a_1,\ldots,a_{m-k}$ be column numbers corresponding to columns in $A^{'}$ that are linearly dependent on $a_1^{'},\ldots,a_k^{'},b_1^{'},\ldots,b_{m-k}^{'}$. There exists a matrix $X$ such that map columns $b_1^{'},\ldots,b_{m-k}^{'}$ in $B^{'}$ to columns $a_1,\ldots,a_{m-k}$ in $B^{'}X$ and other columns to $0$. Thus $A = A^{'}+B^{'}X$ is full rank.
    \end{proof}

	\begin{lemma}[Schur Complement]
		\label{lem:schur}
        Suppose $A$ is an invertible submatrix in the top left corner of a larger square matrix. Then 
		\[
		\begin{bmatrix}
			A & B \\
			C & D
		\end{bmatrix} 
		=
		\begin{bmatrix}
			I & 0 \\
			CA^{-1} & I 
		\end{bmatrix}
		\begin{bmatrix}
			A & 0 \\
			0 & S
		\end{bmatrix}
		\begin{bmatrix}
			I & A^{-1}B \\
			0 & I 
		\end{bmatrix}
		\]
		where $S := D - CA^{-1}B$ is called the \emph{Schur complement}. 
	\end{lemma}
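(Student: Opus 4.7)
The plan is to verify the identity by direct matrix multiplication, working right-to-left. The statement is a purely algebraic claim about block matrices over any field (here $\mathbb{F}_2$), and no ingredients beyond the invertibility of $A$ (used so that $A^{-1}$ makes sense) are required.

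First, I would multiply the middle block-diagonal matrix with the rightmost block-upper-triangular matrix:
\[
\begin{bmatrix} A & 0 \\ 0 & S \end{bmatrix}
\begin{bmatrix} I & A^{-1}B \\ 0 & I \end{bmatrix}
= \begin{bmatrix} A & A A^{-1} B \\ 0 & S \end{bmatrix}
= \begin{bmatrix} A & B \\ 0 & S \end{bmatrix}.
\]
Then I would left-multiply by the block-lower-triangular matrix:
\[
\begin{bmatrix} I & 0 \\ CA^{-1} & I \end{bmatrix}
\begin{bmatrix} A & B \\ 0 & S \end{bmatrix}
= \begin{bmatrix} A & B \\ CA^{-1}A & CA^{-1}B + S \end{bmatrix}
= \begin{bmatrix} A & B \\ C & CA^{-1}B + S \end{bmatrix}.
\]
Substituting the definition $S = D - CA^{-1}B$ collapses the bottom-right entry to $D$, which matches the left-hand side.

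There is essentially no obstacle: the only thing one needs is that block matrix multiplication is well-defined, which requires compatible block dimensions (the blocks $A, B, C, D$ have sizes $m \times m$, $m \times (n-m)$, $(n-m) \times m$, $(n-m) \times (n-m)$ respectively, inherited from the context of \Cref{lem:invertible_upper_block_2}). The invertibility of $A$ is used only to ensure the factors $CA^{-1}$ and $A^{-1}B$ exist; no cancellation subtleties over $\mathbb{F}_2$ arise since the identity is a formal polynomial identity in the entries of the three factor matrices.
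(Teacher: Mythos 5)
Your proof is correct and is exactly the standard direct verification one would give; the paper states this lemma without proof, treating it as a known fact. Your right-to-left block multiplication and the cancellation $CA^{-1}B + S = D$ are all accurate, and your remark that the identity holds over any field (so no $\mathbb{F}_2$-specific issues arise) is apt.
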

    
	\begin{theorem}[Thompson \cite{commutators}]
		\label{thm:commutator}
		Every matrix of dimension $n \geq 3$ whose determinant is $1$ can be written as a group commutator $PQP^{-1}Q^{-1}$. 
	\end{theorem}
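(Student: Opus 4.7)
The plan is to prove Thompson's theorem by a canonical-form reduction followed by an explicit commutator construction case by case. First I would observe that being a commutator is invariant under conjugation: if $M = PQP^{-1}Q^{-1}$ and $M' = RMR^{-1}$, then $M' = (RPR^{-1})(RQR^{-1})(RPR^{-1})^{-1}(RQR^{-1})^{-1}$, so it suffices to prove the statement for one representative in each similarity class inside $SL_n$.

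Second, I would recast the problem using the elementary equivalence
\[
M = PQP^{-1}Q^{-1} \;\Longleftrightarrow\; MQ = PQP^{-1} \;\Longleftrightarrow\; Q \text{ and } MQ \text{ are similar}.
\]
Thus $M$ is a commutator if and only if there exists an invertible $Q$ for which $Q$ and $MQ$ share a rational canonical form; one can then take $P$ to be any matrix conjugating $Q$ to $MQ$. This reformulates Thompson's theorem as the existence of such a $Q$ for each $M \in SL_n$.

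Third, I would place $M$ in rational canonical form $\bigoplus_i C_{p_i}$ determined by its invariant factors $p_1 \mid p_2 \mid \cdots \mid p_k$, noting that $\det M = 1$ translates to a constraint on the constant terms of the $p_i$. The anchor subcase is when $M$ is a single companion matrix $C_p$ of a polynomial of degree $n$. Here one tries to pick $Q$ to be itself cyclic (say, a companion matrix for a well-chosen polynomial $q$) such that $C_p Q$ remains nonderogatory and has the same characteristic polynomial as $Q$; since two nonderogatory matrices are similar as soon as they share a characteristic polynomial, this gives $Q \sim MQ$. For the multi-block case, one builds $Q$ block by block and uses the slack provided by $n \ge 3$ to add off-diagonal coupling entries that adjust the characteristic polynomial of $MQ$ to match that of $Q$ without destroying invertibility.

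The main obstacle will be the bookkeeping for non-cyclic canonical forms (repeated invariant factors and large Jordan blocks over extensions), plus the characteristic-$2$ idiosyncrasies that matter for the binary matrices used elsewhere in the paper. Verifying that a single choice of $Q$ is simultaneously invertible and produces the correct similarity type for every canonical shape is where the argument becomes delicate. The hypothesis $n \ge 3$ has to be used at some point, since the theorem genuinely fails for $n = 2$ over $\mathbb{F}_2$ and $\mathbb{F}_3$: any successful proof must somewhere exploit the extra dimension to escape those small obstructions, for instance by producing the off-diagonal coupling described above.
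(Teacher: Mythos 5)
The paper does not prove this statement; it is quoted verbatim from Thompson \cite{commutators} as a black-box ingredient, so there is no in-paper argument against which to compare your proposal. Evaluated on its own, your key reformulation --- $M$ is a commutator $PQP^{-1}Q^{-1}$ if and only if some invertible $Q$ satisfies $Q \sim MQ$, after which $P$ is taken to be any conjugating matrix --- is correct, and it is indeed the classical pivot on which Thompson's argument (and its antecedents) turns. The reduction to one representative per conjugacy class via rational canonical form is likewise the right move.

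But what you have written is a roadmap, not a proof: the case analysis you defer is the entire mathematical content of Thompson's theorem. Even in your ``anchor subcase'' (a single companion matrix $C_p$) you assert without construction that a cyclic $Q$ can be chosen so that $C_pQ$ is nonderogatory and shares $Q$'s characteristic polynomial; over $\mathbb{F}_2$ --- the field this paper actually works over, where $\det M = 1$ is automatic --- the admissible coefficients are so constrained that this step is precisely where $n \geq 3$ must be used, and your sketch never shows where that hypothesis actually enters the construction. You only observe, correctly, that the conclusion fails for $n = 2$ over $\mathbb{F}_2$ and $\mathbb{F}_3$ (e.g.\ transpositions in $GL_2(\mathbb{F}_2) \cong S_3$ are not commutators), which explains why the hypothesis is necessary but not how it is used. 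To turn this into a proof you would need explicit, invariant-factor-dependent choices of $Q$ together with a verification that $Q \sim MQ$ in every canonical shape; that casework is exactly what occupies the cited paper, and the present note is right to treat it as a citation rather than reprove it.
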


	\begin{theorem}
        \label{thm:ten_layers}
        Suppose $M = [\begin{smallmatrix} A & B \\ C & D \end{smallmatrix}]$ is a $n \times n$ matrix where $n=2m$ and $A$ is an $m \times m$ invertible submatrix. For $m \geq 3$, there is a circuit for an arbitrary $n \times n$ matrix $M$ that has a commutative depth of $10$, and uses no ancillas.
	\end{theorem}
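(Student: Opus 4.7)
My plan is to realize $M$ in two structural pieces: two linear additions coming from the Schur complement decomposition, and the block-diagonal matrix $A \oplus S$ in the middle, which itself is implemented as a group commutator of four $X \oplus X^{-1}$ gadgets. Since $A$ is assumed invertible, Lemma~\ref{lem:schur} gives immediately $M = L_1 (A \oplus S) L_2$ with $L_1 = \begin{bmatrix} I & 0 \\ CA^{-1} & I \end{bmatrix}$ and $L_2 = \begin{bmatrix} I & A^{-1}B \\ 0 & I \end{bmatrix}$. Each $L_i$ is a linear addition in the sense of Eq.~\eqref{eq:M-XOR}, implementable in a single commutative layer of CNOTs.

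To realize $A \oplus S$ in the remaining 8 layers, I use the $X \oplus X^{-1}$ gadget of Eq.~\eqref{fig:M-M}, which costs 3 commutative layers, and form a length-4 commutator of such gadgets. A direct computation gives
\[
(P \oplus P^{-1})(Q \oplus Q^{-1})(P^{-1} \oplus P)(Q^{-1} \oplus Q) \;=\; [P, Q] \oplus [P^{-1}, Q^{-1}],
\]
where $[P, Q] = P Q P^{-1} Q^{-1}$ and $[P^{-1}, Q^{-1}] = (PQ)^{-1} [P, Q] (PQ)$. Using the hypothesis $m \geq 3$, Thompson's Theorem~\ref{thm:commutator} allows one to choose invertible $m \times m$ matrices $P, Q$ with $[P, Q] = A$. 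The bottom block is then a specific conjugate of $A$; to realize an arbitrary $S$ I absorb a cheap block-diagonal conjugation of the form $I \oplus J$ into the Schur layers $L_1, L_2$ (tuning $J$ so that the effective bottom-right block matches $S$), which preserves the overall layer count because the $J$-factor can be merged into the triangular structure of $L_1$ and $L_2$ without creating a new commuting layer.

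The tight layer count is secured by merging same-direction linear additions across gadget boundaries. The four gadgets naively contribute $4 \cdot 3 = 12$ layers, but the first and last layers of each $X \oplus X^{-1}$ gadget are both ``$+X$ top-to-bottom'' linear additions (per Eq.~\eqref{fig:M-M}), so at each of the three internal boundaries the two consecutive same-direction additions merge into one via Eq.~\eqref{fig:R+L}, saving 3 layers and reducing the gadget chain to 9. A further merge between the outermost gadget layer and the Schur layer $L_1$ (which is also a top-to-bottom linear addition) saves one more layer, giving 8 layers for $A \oplus S$ and 10 layers overall, with no ancillas.

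The main obstacle is twofold: an algebraic matching---ensuring the arbitrary Schur complement $S$ can be fit against the forced-conjugate bottom block of the commutator identity---and a combinatorial layer-count---verifying that the boundary merges are legitimate and collectively save exactly the needed four layers. The algebraic issue is handled by the absorbed $I \oplus J$ correction together with Thompson's freedom in choosing $P, Q$; the combinatorial issue reduces to tracking the direction of each outermost linear addition in every $X \oplus X^{-1}$ gadget and in $L_1, L_2$, which is a finite verification from the explicit circuit of Eq.~\eqref{fig:M-M}. The hypothesis $m \geq 3$ enters precisely at the invocation of Thompson's theorem, explaining the condition in the statement.
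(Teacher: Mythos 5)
Your Schur-complement opening matches the paper's, and the broad outline (reduce to a block-diagonal middle, realize it via a commutator and four 3-layer $X \oplus X^{-1}$ gadgets, then merge same-direction linear additions at the boundaries) is the right shape of argument. However, the specific commutator identity you use does not reach all the block-diagonal matrices you need, and the repair you propose for this does not work, so there is a genuine gap.

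Concretely, your identity
$(P \oplus P^{-1})(Q \oplus Q^{-1})(P^{-1} \oplus P)(Q^{-1} \oplus Q) = [P,Q] \oplus [P^{-1},Q^{-1}]$
ties the two blocks together: as you yourself note, $[P^{-1},Q^{-1}] = (PQ)^{-1}[P,Q](PQ)$, so once you set $[P,Q]=A$ the bottom block is \emph{forced} to be a conjugate of $A$. Thompson's theorem gives you freedom in choosing $P,Q$, but that freedom only ranges over which conjugate of $A$ you get, never over matrices outside the conjugacy class of $A$. Meanwhile the Schur complement $S = D - CA^{-1}B$ of an arbitrary $M$ with invertible $A$ can be \emph{any} invertible matrix (consider $M = A \oplus S$ block-diagonal). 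So in general $S$ is not conjugate to $A$, and your construction cannot produce $A \oplus S$. The proposed fix — conjugating by $I \oplus J$ and absorbing $J$ into $L_1,L_2$ — fails on both counts: conjugation $(I \oplus J)(A \oplus S')(I \oplus J)^{-1} = A \oplus JS'J^{-1}$ still keeps the bottom block in the same conjugacy class, and multiplying a strictly triangular block like $\bigl[\begin{smallmatrix} I & 0 \\ X & I \end{smallmatrix}\bigr]$ by $I \oplus J$ yields $\bigl[\begin{smallmatrix} I & 0 \\ X & J \end{smallmatrix}\bigr]$, which is no longer a single commuting layer of CNOTs.

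The paper sidesteps this by inserting one more factorization before invoking Thompson: $A \oplus S = (A \oplus A^{-1})(I \oplus AS)$. The matrix $AS$ is an arbitrary determinant-$1$ matrix (over $\mathbb{F}_2$ every invertible matrix has determinant $1$), so Thompson applies to it directly, $AS = PQP^{-1}Q^{-1}$. Then the commutator trick is applied to the one-sided block $I \oplus AS$ using only \emph{three} gadgets,
\[
I \oplus PQP^{-1}Q^{-1} = (P^{-1} \oplus P)(PQ^{-1} \oplus QP^{-1})(Q \oplus Q^{-1}),
\]
and the $A \oplus A^{-1}$ factor supplies the fourth gadget. This reaches every $A \oplus S$ with $A$ invertible, and the same boundary-merging argument you sketch then brings the total from $14$ down to $10$. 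In short, you have the right ingredients (Schur, Thompson, $X \oplus X^{-1}$ gadgets, merging) but the commutator is applied to the wrong object: it must be applied to $AS$ in a one-sided form, not to $A$ with the other block coming along for the ride.
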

	\begin{proof}
		Suppose $M$ is $n \times n$, and consider the $m \times m$ blocks. By assumption, the upper left submatrix ($A$) is invertible, so we can apply \Cref{lem:schur}. 
		\begin{equation}
		M = 
		\begin{bmatrix}
			A & B \\
			C & D
		\end{bmatrix} 
		=
		\begin{bmatrix}
			I & 0 \\
			CA^{-1} & I 
		\end{bmatrix}
		\begin{bmatrix}
			A & 0 \\
			0 & S
		\end{bmatrix}
		\begin{bmatrix}
			I & A^{-1}B \\
			0 & I 
		\end{bmatrix}.
		\end{equation}
		In other words, a layer at the beginning and end reduce the problem to a block diagonal matrix. 
		Further decompose the diagonal block matrix as 
		\begin{equation}
		\begin{bmatrix}
			A & 0 \\
			0 & S 
		\end{bmatrix}
		=
		\begin{bmatrix}
			A & 0 \\
			0 & A^{-1} 
		\end{bmatrix}
		\begin{bmatrix}
			I & 0 \\
			0 & AS
		\end{bmatrix}.
		\end{equation}
		We use \Cref{thm:commutator} to write $AS$ as a commutator $PQP^{-1}Q^{-1}$. We have the following trick to implement a commutator: 
		\begin{equation}
			\begin{bmatrix}
				I & 0 \\
				0 & AS 
			\end{bmatrix}
			=
			\begin{bmatrix}
				I & 0 \\
				0 & PQP^{-1}Q^{-1} 
			\end{bmatrix}
			= 
			\begin{bmatrix}
				P^{-1} & 0 \\
				0 & P
			\end{bmatrix}
			\begin{bmatrix}
				PQ^{-1} & 0 \\
				0 & QP^{-1}
			\end{bmatrix}
			\begin{bmatrix}
				Q & 0 \\
				0 & Q^{-1}
			\end{bmatrix}.
		\end{equation}
		Altogether, this means we can write $M$ as 
		\begin{equation}
		\label{eq:main_decomp}
		M = 
		\begin{bmatrix}
			I & 0 \\
			CA^{-1} & I 
		\end{bmatrix}
		\begin{bmatrix}
			A & 0 \\
			0 & A^{-1} 
		\end{bmatrix}
		\begin{bmatrix}
			P^{-1} & 0 \\
			0 & P
		\end{bmatrix}
		\begin{bmatrix}
			PQ^{-1} & 0 \\
			0 & QP^{-1}
		\end{bmatrix}
		\begin{bmatrix}
			Q & 0 \\
			0 & Q^{-1}
		\end{bmatrix}
		\begin{bmatrix}
			I & A^{-1}B \\
			0 & I 
		\end{bmatrix}
		\end{equation}
		where the matrices on either end cost $1$ layer each, and the four matrices in the middle are $3$ layers apiece by \eqref{fig:M-M}. The circuit is as follows.

    \begin{align}\label{fig:p1}
    \begin{matrix}
    \tikzsetnextfilename{c1}
    \begin{quantikz}[row sep={11.5mm,between origins},align equals at=1.5,column sep=1.5mm]
    	& \targ{} & \gate{Q} & \gate{PQ^{-1}} & \gate{P^{-1}} & \gate{A} & \gate{CA^{-1}}\wire[d][1]{q} & \\
    	& \gate{A^{-1}B}\wire[u][1]{q} & \gate{Q^{-1}} & \gate{QP^{-1}} & \gate{P} & \gate{A^{-1}} & \targ{} & 
    \end{quantikz}
    \end{matrix}
    \end{align}
    
    We combine adjacent gates with the same orientation by \Cref{fig:R+L} reducing the commutative depth to $10$. 

    \begin{align}\label{fig:p2}
    \begin{matrix}
        \tikzsetnextfilename{p2}
        \begin{quantikz}[row sep={11.5mm,between origins},align equals at=1.5,column sep=1.5mm]
        	& \targ{}\wire[d][1]{q} & \gate{Q} & \targ{}\wire[d][1]{q} & \gate{QP^{-1}} & \targ{}\wire[d][1]{q} & \gate{P^{-1}} & \targ{}\wire[d][1]{q} & \gate{A^{-1}} & \targ{}\wire[d][1]{q} & \gate{CA^{-1}} & \\
        	& \gate{A^{-1}B + Q^{-1}} & \targ{}\wire[u][1]{q} & \gate{Q^{-1} + PQ^{-1}} & \targ{}\wire[u][1]{q} & \gate{PQ^{-1} + P} & \targ{}\wire[u][1]{q} & \gate{P+A} & \targ{}\wire[u][1]{q} & \gate{A} & \targ{}\wire[u][1]{q} & 
        \end{quantikz}
    \end{matrix}  
    \end{align}
    
	\end{proof}

    \begin{corollary}
        \label{cor:main_linear}
        Given an arbitrary matrix $M \in \mathbb F_2^{2m \times 2m}$, there are circuits to compute the linear transformation $M$ in commutative depth $11$.
    \end{corollary}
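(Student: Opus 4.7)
My plan is to reduce the general case to \Cref{thm:ten_layers} by prepending exactly one commuting layer of CNOTs. Let $M \in \mathbb F_2^{2m \times 2m}$ be an arbitrary (invertible) matrix representing a linear permutation. Apply \Cref{lem:invertible_upper_block_2} with $n = 2m$ and block size $m$: it produces an $m \times m$ matrix $X$ such that
\[
M' \;:=\; M \bigl[\begin{smallmatrix} I & 0 \\ X & I \end{smallmatrix}\bigr]
\]
has its top-left $m \times m$ block invertible. Over $\mathbb F_2$ the matrix $\bigl[\begin{smallmatrix} I & 0 \\ X & I \end{smallmatrix}\bigr]$ is its own inverse, hence $M = M' \bigl[\begin{smallmatrix} I & 0 \\ X & I \end{smallmatrix}\bigr]$.

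To build the circuit I would first apply the rightmost factor, which is a single linear addition (control on the top $m$-qubit register, target on the bottom). As noted immediately after Eq.~\eqref{eq:M-XOR}, any such linear addition is realized by one commuting layer of CNOT gates. I would then apply the commutative-depth-$10$ circuit from \Cref{thm:ten_layers} to $M'$, yielding total commutative depth $10 + 1 = 11$.

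The edge cases $m \in \{1,2\}$ fall outside the hypothesis $m \geq 3$ of \Cref{thm:ten_layers}, but those instances involve at most $4$ qubits and any constant depth bound is trivially achievable (either by a direct construction or by padding $M$ with an identity block so that $m \geq 3$). I do not anticipate any substantive obstacle; the only subtle point is verifying that the new initial $X$-layer truly contributes one extra layer and is not forced to merge with anything inside the Theorem 2 circuit. This is automatic: the new layer is top-controlled and bottom-targeted, whereas the leftmost gate of \eqref{fig:p2} is bottom-controlled and top-targeted, so no merge is possible in a single commuting layer and the bound $11$ drops out immediately.
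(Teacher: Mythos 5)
Your proof is correct and takes essentially the same route as the paper: use Lemma~\ref{lem:invertible_upper_block_2} to produce a single commuting layer of CNOTs (the block lower-triangular factor $\bigl[\begin{smallmatrix} I & 0 \\ X & I \end{smallmatrix}\bigr]$, self-inverse over $\mathbb{F}_2$) that turns $M$ into an $M'$ with invertible top-left block, then apply the depth-$10$ circuit from Theorem~\ref{thm:ten_layers} to $M'$, for a total of $11$ layers. Your explicit handling of the $m\in\{1,2\}$ edge cases (excluded by the $m\geq 3$ hypothesis of Theorem~\ref{thm:ten_layers}) is a sensible addition that the paper's proof leaves implicit; the rest is the same argument.
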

    \begin{proof}
        By \Cref{lem:invertible_upper_block_2}, there exists a layer of CNOT gates that transforms the matrix into some $M'$ where the top left block is invertible. We then use \Cref{thm:ten_layers} to construct a circuit for $M'$. Finally, having constructed $M'$, we undo the layer of CNOT gates to get $M$. We use $10$ layers for \Cref{thm:ten_layers}, and then one layer for \Cref{lem:invertible_upper_block_2}.
    \end{proof}

    \subsection{Clifford operations}

    The Clifford operations are generated by Hadamard ($H$), Phase ($S$), and CNOT gates, and thus clearly contain linear operations (generated by CNOT) or affine operators (CNOT and $X$) as important subgroups. On the other hand, there are a number of ways to decompose a Clifford operation into the generators as layers of single-qubit gates, CNOT gates, or CZ gates. We use the following decomposition taken from Bravyi and Maslov \cite[Lemma 8]{Bravyi2020HadamardFreeCE}. 
    \begin{theorem}
        \label{thm:decompose_clifford}
        Any Clifford operation may be decomposed into a sequence of 6 layers, $\mathcal{S}$-CNOT-CZ-$\mathcal{S}$-CZ-$\mathcal{S}$, where $\mathcal{S}$ represents a layer of single qubit gates, CNOT is an arbitrary invertible linear operation, and CZ is a layer of CZ gates.
    \end{theorem}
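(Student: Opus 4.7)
The plan is to reduce this to a factorisation of the $2n\times 2n$ binary symplectic matrix $F$ that encodes an arbitrary Clifford $U$ (up to an overall phase and a Pauli correction, which can always be absorbed into a boundary $\mathcal{S}$ layer). Write $F$ in $n\times n$ blocks as $\bigl[\begin{smallmatrix} A & B \\ \Gamma & D \end{smallmatrix}\bigr]$. Each layer type has a canonical symplectic form: a CNOT layer is $\bigl[\begin{smallmatrix} M & 0 \\ 0 & M^{-T} \end{smallmatrix}\bigr]$ for invertible $M$; a CZ layer is $\bigl[\begin{smallmatrix} I & 0 \\ S & I \end{smallmatrix}\bigr]$ for symmetric $S$ with zero diagonal; an $\mathcal{S}$ layer is block-diagonal per qubit (each $2\times 2$ block being one of the six $2\times 2$ symplectic matrices, including the ``swap'' corresponding to $H$).

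I would proceed in two main steps. First, recall the standard three-layer $\mathcal{S}\text{-CZ-CNOT}$ decomposition for \emph{Hadamard-free} Cliffords (those with $B=0$): the CNOT layer realises $A$, the CZ layer encodes the off-diagonal part of the symmetric matrix $\Gamma A^{-1}$, and the $\mathcal{S}$ layer holds phases from $S$-gates together with the diagonal of $\Gamma A^{-1}$. Second, for an arbitrary $F$, pick a subset $J\subseteq [n]$ such that conjugating by $H^{\otimes J}$ from one side yields a Hadamard-free Clifford; such a $J$ exists by a ``maximal invertible principal minor'' argument applied to the $2n\times n$ matrix $\bigl[\begin{smallmatrix} A \\ \Gamma \end{smallmatrix}\bigr]$, which has full column rank because $F$ is invertible. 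This produces a seven-layer factorisation
\[
U \;=\; \mathcal{S}_a\cdot \mathrm{CZ}_a\cdot \mathrm{CNOT}_a\cdot H^{\otimes J}\cdot \mathrm{CNOT}_b\cdot \mathrm{CZ}_b\cdot \mathcal{S}_b.
\]

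To reach six layers, I would collapse the central block $\mathrm{CNOT}_a\cdot H^{\otimes J}\cdot \mathrm{CNOT}_b$ down to one CNOT layer plus updates to the surrounding layers. Using the conjugation identities $(I\otimes H)\,\mathrm{CNOT}\,(I\otimes H)=\mathrm{CZ}$ and $(H\otimes H)\,\mathrm{CNOT}\,(H\otimes H)=\mathrm{CNOT}$ with control and target reversed, one pushes $H^{\otimes J}$ through $\mathrm{CNOT}_b$: CNOTs whose target lies in $J$ turn into CZs and merge into $\mathrm{CZ}_b$; CNOTs whose control lies in $J$ have their control and target swapped and then merge with $\mathrm{CNOT}_a$ via the aligned-layers rule of Eq.~\eqref{fig:R+L}; the residual Hadamards settle into the central single-qubit layer $\mathcal{S}_2$.

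The main obstacle is this last collapsing step. The $H$-push identities apply to individual CNOT gates, but a CNOT \emph{layer} requires all its constituent CNOTs to commute simultaneously, and similarly the resulting CZs must form one commuting layer. One must therefore verify that, after pushing $H^{\otimes J}$ through $\mathrm{CNOT}_b$, the by-product CZ gates assemble into a legitimate CZ layer (compatible with $\mathrm{CZ}_b$) and the by-product CNOTs are ``oriented'' consistently with $\mathrm{CNOT}_a$ so that the two merge into a single CNOT layer. The choice of $J$ from the maximal-minor argument in Step~2 is exactly what secures this compatibility: it arranges the Hadamard-free matrices on each side so that their control/target partitions for the CNOT pieces match, and the extra CZs generated by the push are all symmetric extensions of the existing $\mathrm{CZ}_b$ matrix.
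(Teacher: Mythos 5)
The paper does not prove this theorem at all: it is quoted directly from Bravyi and Maslov \cite[Lemma 8]{Bravyi2020HadamardFreeCE}, so there is no in-paper argument to compare against, and your proposal must stand on its own. Two steps have genuine gaps.

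The one-sided Hadamard-cleansing claim is false. Left-multiplying $F = \bigl[\begin{smallmatrix}A & B \\ \Gamma & D\end{smallmatrix}\bigr]$ by the symplectic matrix of $H^{\otimes J}$ replaces the $i$-th row of the top half by the $i$-th row of the bottom half whenever $i \in J$, so the new top-right block is built from row $i$ of $B$ (for $i\notin J$) and row $i$ of $D$ (for $i\in J$); it vanishes only if for every $i$ at least one of those rows is zero. The symplectic matrix $F = \bigl[\begin{smallmatrix}I & I \\ 0 & I\end{smallmatrix}\bigr]$ (i.e.\ $\sqrt{X}^{\otimes n}$) has $B = D = I$, so the new top-right block equals $I$ for every choice of $J$; no one-sided Hadamard layer makes it Hadamard-free. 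What the maximal-minor argument on $\bigl[\begin{smallmatrix}A \\ \Gamma\end{smallmatrix}\bigr]$ actually yields is that the top-\emph{left} block can be made invertible (a Lagrangian-complement statement), which is much weaker. Your own seven-layer form $F_1 H^{\otimes J} F_2$ already concedes this implicitly: if $H^{\otimes J} U$ were Hadamard-free you would need only one Hadamard-free factor and would land at four layers, not seven.

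The collapse from seven layers to six is the real content of the Bravyi--Maslov lemma and your proposal explicitly leaves it unproven. Conjugating an arbitrary CNOT layer (an arbitrary invertible linear map) by $H^{\otimes J}$ produces a general symplectic transformation, not a product of one CNOT layer and one CZ layer, and pushing Hadamards through one gate at a time does not help when several CNOTs share qubits---the by-product CZs and reversed CNOTs need not reassemble into two legal commuting layers. ``The choice of $J$ secures this compatibility'' is an assertion, not an argument; in Bravyi--Maslov the corresponding step is where the work is done (by arranging the two Hadamard-free factors so that the linear part of one of them disappears entirely). As written, the proposal does not establish the theorem.
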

    As an easy corollary of this result and \Cref{cor:main_linear}, there exist constant-commutative-depth circuits for Clifford operations. 
    \begin{corollary}
        \label{cor:main_clifford}
        An arbitrary Clifford operation on $2m$ qubits can be computed by some circuit with commutative depth $16$ and no ancillas.
    \end{corollary}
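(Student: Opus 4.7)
The plan is to invoke the two main ingredients already developed in this section and add up the commutative depths. First I would apply Theorem~\ref{thm:decompose_clifford} to write an arbitrary Clifford as the sequence
\[
\mathcal{S}_1 \cdot \text{CNOT} \cdot \text{CZ}_1 \cdot \mathcal{S}_2 \cdot \text{CZ}_2 \cdot \mathcal{S}_3,
\]
where the $\mathcal{S}_i$ are layers of single-qubit Clifford gates, the CNOT block is an arbitrary invertible $\mathbb F_2$-linear transformation on $2m$ qubits, and each CZ block is a layer of CZ gates. The proof is then just a bookkeeping exercise: upper-bound the commutative depth of each factor and sum.

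For the bookkeeping, I would note that each single-qubit layer $\mathcal{S}_i$ contributes commutative depth $1$, since any collection of gates on disjoint qubits trivially mutually commutes. Likewise, each CZ block contributes commutative depth $1$, since CZ gates are diagonal in the computational basis and hence pairwise commute. The CNOT block is an arbitrary linear operation on $2m$ qubits, so by Corollary~\ref{cor:main_linear} it can be realized in commutative depth $11$. Adding these contributions gives $1+11+1+1+1+1 = 16$, matching the claim, and no ancillas are introduced since neither Corollary~\ref{cor:main_linear} nor the individual layers require any.

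There is essentially no obstacle beyond being careful about what counts as a single commuting layer: in particular, one should verify that single-qubit gates on distinct qubits commute (immediate), that CZ gates pairwise commute (immediate from diagonality), and that Corollary~\ref{cor:main_linear} applies to an arbitrary, not necessarily full-rank, specification of the linear block (it is stated for arbitrary $M \in \mathbb F_2^{2m\times 2m}$, and in the Clifford decomposition the linear block is in fact invertible, so it applies \emph{a fortiori}). The only other thing worth remarking is that there is no evident way to merge adjacent layers of the decomposition to shave off further depth, since a single-qubit Clifford gate on qubit $i$ does not generally commute with a CZ gate touching qubit $i$, and the first/last commuting layer of the depth-$11$ linear construction from Corollary~\ref{cor:main_linear} consists of CNOT gates that likewise do not commute with arbitrary neighboring single-qubit Cliffords. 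Thus the sum $16$ is the bound produced directly by the construction, and this is what the corollary asserts.
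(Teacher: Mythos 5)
Your proof is correct and follows essentially the same route as the paper: decompose via Theorem~\ref{thm:decompose_clifford} into $\mathcal{S}$--CNOT--CZ--$\mathcal{S}$--CZ--$\mathcal{S}$, charge depth $11$ for the linear (CNOT) block by Corollary~\ref{cor:main_linear}, and depth $1$ each for the five remaining layers. The extra remarks about why adjacent layers cannot be merged go beyond what the paper says but do not change the argument.
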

    \begin{proof}
        Given an arbitrary Clifford operation,     \Cref{thm:decompose_clifford} decomposes it into a linear operation and five other layers. By \Cref{cor:main_linear}, the linear part requires at most $11$ layers. Single-qubit layers are depth $1$ even ordinarily, and the CZ layers have commutative depth $1$ each. Hence, the total is $16$ layers. 
    \end{proof}

    \subsection{Lower bound}

    We also show lower bounds on commutative depth based on counting arguments. The first is about CNOT circuits for linear operations.  
    \begin{theorem}
        \label{thm:lower_bound_linear}
        On sufficiently many bits, there exist linear operations requiring circuits of CNOT gates with commutative depth $4$ or higher.
    \end{theorem}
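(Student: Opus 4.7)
The plan is a counting/pigeonhole argument: if the number of CNOT circuits of commutative depth $\leq d$ on $n$ qubits is strictly less than the total number of invertible linear operations on $n$ bits, then some linear operation requires depth greater than $d$. I will show this holds for $d = 3$ and $n$ sufficiently large, which yields the claim.

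The core step is to bound the number of distinct commuting layers of CNOT gates. A direct calculation shows that $\mathsf{CNOT}_{i,j}$ and $\mathsf{CNOT}_{k,l}$ commute if and only if $i \neq l$ and $j \neq k$: no qubit may simultaneously appear as the control of one gate and the target of another in the same layer. Consequently, in any commuting layer each qubit is exclusively a control, exclusively a target, or unused, giving at most $3^n$ role assignments. Once the control set $C$ and target set $T$ are fixed (with $C \cap T = \emptyset$), the layer is determined by an arbitrary bipartite subgraph of $C \times T$, giving at most $2^{|C||T|} \leq 2^{n^2/4}$ further choices (by AM-GM, since $|C| + |T| \leq n$). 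Hence the number of distinct commuting layers is at most $3^n \cdot 2^{n^2/4} = 2^{n^2/4 + O(n)}$.

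With this bound in hand, the number of CNOT circuits of commutative depth $d$ is at most $2^{dn^2/4 + O(n)}$, while the number of invertible $n \times n$ binary matrices is
\[
|GL_n(\mathbb{F}_2)| \;=\; \prod_{i=0}^{n-1}\bigl(2^n - 2^i\bigr) \;\geq\; c\cdot 2^{n^2}
\]
for an absolute constant $c > 0$. Setting $d = 3$, the number of depth-$3$ circuits is $2^{3n^2/4 + O(n)}$, which is strictly smaller than $c \cdot 2^{n^2}$ for all sufficiently large $n$ (concretely, once $n^2/4$ dominates the $O(n)$ slack term, roughly $n \geq 20$). By pigeonhole, some invertible linear operation is not computed by any CNOT circuit of commutative depth $3$, and therefore requires commutative depth at least $4$.

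The only real subtlety is the commutation characterization and the resulting disjointness of $C$ and $T$, which gives the crucial improvement of the naive $2^{n^2}$ bound down to $2^{n^2/4 + O(n)}$ per layer; without this structural observation the counting argument would collapse. Overcounting elsewhere (e.g., circuits with identity layers or duplicated gates, or the same linear map admitting multiple depth-$3$ realizations) only inflates the upper bound on the number of circuits, which is harmless for a lower bound on required depth.
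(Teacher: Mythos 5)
Your proposal is correct and follows essentially the same counting argument as the paper: bound the number of commuting-CNOT layers by $2^{n^2/4 + O(n)}$ via the control-set/target-set partition, compare $2^{3n^2/4 + O(n)}$ depth-3 circuits against $\Theta(2^{n^2})$ invertible matrices, and conclude by pigeonhole. The only cosmetic difference is that you explicitly state and justify the pairwise commutation criterion underlying the $C$/$T$ partition, which the paper takes as given.
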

    \begin{proof}
        The number of invertible linear transformations on $n$ bits is 
        \(
        \prod_{i=0}^{n-1} (2^n - 2^i) = 2^{n^2 - O(1)}. 
        \)
        See the online encyclopedia of integer sequences (OEIS) entry \cite{oeis002884}, or \cite[Appendix A]{clifford_classification}.
        
        On the other hand, for a layer of commuting CNOTs, the bits can be partitioned into a ``control set'' $C$ and ``target set'' $T$ such that all of the CNOTs have their control in $C$ and target in $T$. Hence, the layer computes a transformation of the form $[\begin{smallmatrix} I & 0 \\ X & I \end{smallmatrix}]$ (up to reordering the bits) where $X$ is a $|T| \times |C|$ matrix. The number of entries in this matrix, $|C| \cdot |T|$, is maximized when $|C| = |T| = \tfrac{n}{2}$. 

        Observe that there are at most $2^n$ choices for how to partition $n$ qubits into controls and targets. Then there are at most $2^{n^2/4}$ choices for $X$, the linear transformation from controls to targets implemented by the layer. Thus, there are at most $2^{n^2/4 + n}$ choices for one layer in the commutative depth model. In depth $\leq d$ there are at most 
        \[
        \sum_{i=0}^{d} (2^{n^2/4 + n})^d \leq d \cdot 2^{dn^2/4 + dn}
        \]
        circuits in our commutative depth model. If we consider depth at most $d=3$, then 
        \[
        \text{\# of linear operations} = 2^{n^2 - O(1)} \geq 3 \cdot 2^{3n^2/4 + 3n} \geq \text{\# of depth-3 circuits}
        \]
        and \emph{some} linear operation requires depth $4$ or more. 
    \end{proof}
    
    A similar approach bounds the Clifford operations as well. 
    \begin{theorem}
        \label{thm:lower_bound_clifford}
        On sufficiently many qubits, there exist Clifford operations requiring commutative depth $4$ or more.
    \end{theorem}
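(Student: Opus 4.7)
The plan is to extend the counting argument of \Cref{thm:lower_bound_linear} to the Clifford setting. The order of the Clifford group is $|Cl_n| = 2^{n^2+2n}\prod_{j=1}^n(4^j-1) = 2^{2n^2 + O(n)}$, so if we can show that a single commuting Clifford layer is parameterized by $cn^2 + O(n)$ bits with $c < 2/3$, then the number of depth-$3$ circuits, which is at most $(2^{cn^2 + O(n)})^3$, falls short of $|Cl_n|$ for large $n$, forcing depth $\ge 4$ on some Clifford.

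The key step is to bound a single commuting layer by $2^{n^2/2 + O(n)}$ configurations (so $c = 1/2 < 2/3$). The structural observation is that each generating gate --- CNOT, CZ, CY, or a single-qubit Clifford from $\langle H, S\rangle$ --- has a canonical Hamiltonian that is a sum of Paulis of weight at most $2$, and two gates commute iff every pair of their Pauli components commute. Restricting attention to a single qubit $i$, all non-identity Paulis acting on qubit $i$ in the layer's gates must pairwise commute; since $X_i, Y_i, Z_i$ pairwise anti-commute, each qubit admits a single ``Pauli tag'' in $\{I, X, Y, Z\}$, with the further case that a Hadamard-like gate on qubit $i$ (whose Hamiltonian mixes $X_i$ and $Z_i$) forces qubit $i$ to be untouched by any other gate in that layer. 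This leaves $2^{O(n)}$ taggings.

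Given the tagging, each unordered pair of qubits $\{i, j\}$ with tags $(t_i, t_j)$ admits at most one $2$-qubit gate type from the generating set --- e.g., $(Z,Z) \mapsto CZ$, $(Z,X) \mapsto CNOT$, $(Z,Y) \mapsto CY$, and so on --- while the remaining tag combinations such as $(X,X)$ or $(X,Y)$ admit no $2$-qubit gate because none is present in the generating set. Hence each of the $\binom{n}{2}$ pairs contributes at most one bit of choice, for $2^{\binom{n}{2}}$ configurations, while single-qubit gate choices contribute another $24^n = 2^{O(n)}$. Multiplying, the number of commuting layers is at most $2^{n^2/2 + O(n)}$, so depth-$d$ circuits compute at most $d \cdot 2^{dn^2/2 + O(n)}$ distinct Cliffords. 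For $d=3$ this is $2^{3n^2/2 + O(n)} < 2^{2n^2 - O(n)} = |Cl_n|$ once $n$ is sufficiently large, and some Clifford is not computable in commutative depth $\le 3$.

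The hardest part will be verifying the ``at most one $2$-qubit gate per tag pair'' claim cleanly: one must rule out, for instance, configurations where a Hadamard on one qubit seemingly interacts with another gate on its neighbor (the isolation property of $H$-qubits handles this) and confirm that products of mutually commuting single-qubit gates with $2$-qubit gates do not secretly add degrees of freedom beyond the $\binom{n}{2}$ pair bits. Each such subtlety contributes only a $2^{O(n)}$ further factor, which does not disturb the leading $n^2/2$ term, so the bound $c = 1/2$ is robust and the counting argument yields the depth-$4$ lower bound.
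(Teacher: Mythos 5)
Your proposal is correct and takes essentially the same route as the paper: assign each qubit a Pauli tag from $\{I,X,Y,Z\}$ forced by the two-qubit gates touching it, observe that each qubit pair then contributes at most one bit to the layer, bound the number of commuting layers by $2^{n^2/2+O(n)}$, and compare against the $2^{2n^2+O(n)}$ Clifford operations to rule out depth $3$. The one cosmetic difference is that the paper first \emph{enlarges} the gate set to generalized CNOTs $C(P,Q)$ (plus SWAP and single-qubit Cliffords) so that every pair of non-$I$-tagged qubits uniformly admits one possible gate, whereas you stay inside $\{\textsf{CNOT},\textsf{C}Z,\textsf{C}Y\}\cup\langle H,S\rangle$ and note that only pairs involving a $Z$-tagged qubit can host a gate; the $\binom{n}{2}$ upper bound and hence the $c=\tfrac12<\tfrac23$ conclusion is identical. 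One small caution: your opening claim that ``two gates commute iff every pair of their Pauli components commute'' is stated as an equivalence, but the converse direction is false in general (a gate trivially commutes with itself while, e.g., Hadamard's Pauli components $X$ and $Z$ anti-commute); you only actually use the direction that works for distinct controlled-Pauli gates overlapping on a qubit, which can be verified directly, so this imprecision does not affect the argument.
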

    \begin{proof}
        First, we move to a slightly larger gate set. A \emph{generalized CNOT gate} is the two-qubit Clifford gate 
        \[
        C(P, Q) := \tfrac{1}{2}(I \otimes I + P \otimes I + I \otimes Q - P \otimes Q),
        \]
        defined for Pauli operators $P, Q \in \{ X, Y, Z \}$. This includes CNOT ($C(Z,X)$) and CZ ($C(Z,Z)$), and their equivalents in other bases. Two such gates commute if and only if they have the same Pauli on the qubit(s) where they overlap.

        With circuits of generalized CNOT gates, swap gates, and single-qubit gates, we claim a single layer in the commutative depth model has at most $2^{n^2/2 + O(n)}$ possibilities. To see this, note that each qubit can be assigned a Pauli: $X$, $Y$ or $Z$ depending on the generalized CNOT gates which act on it (they must all agree or they do not commute), or $I$ if there is no such gate (i.e., SWAP or single-qubit gates only). Then for each pair of $X$/$Y$/$Z$ qubits, we can choose to have a generalized CNOT gate or not (they are self-inverse), and for each $I$ qubit we can choose from $24$ single-qubit Clifford gates. In other words, 
        \[
        2^{\tfrac{1}{2}(n-i)(n-i-1)} 24^{i}
        \]
        choices where $i$ is the number of $I$ qubits. For sufficiently large $n$, this is maximized when $i = 0$, where it is $2^{n^2/2 + O(n)}$. The choice from $2^{2n}$ possible Paulis for the qubits is in the lower order terms, so there are at most $2^{n^2/2 + O(n)}$ single layers under this gate set in the commutative depth model. 
        
        On the other hand, the number of Clifford operations on $n$ qubits is asymptotically $2^{2n^2 + 3n + O(1)}$. See OEIS \cite{oeis003956} (which includes a superfluous factor of 8 for the phase) or \cite[Appendix A]{clifford_classification}. As before, some operations cannot be depth $3$ because there are not enough depth $3$ circuits for all Clifford operations. 
        \[
        \text{\# of Clifford operations} = 2^{2n^2 + 3n + O(1)} \geq (2^{n^2/2 + O(n)})^{3} \geq \text{\# of depth-3 Clifford circuits.}
        \]
    \end{proof}

    \begin{theorem}
        For any $n$, there exists an $n$-qubit Clifford operation for which any Clifford implementation requires commutative depth $d \geq \frac{n}{5}$ or at least $\frac{n^2}{2(1 + \log_2 d)}$ two-qubit gates. For instance, $O(1)$ commutative depth circuits require $\Omega(n^2)$ gates. 
    \end{theorem}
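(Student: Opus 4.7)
\vspace{2mm}
\noindent\textbf{Proof plan.}
The plan is to refine the counting argument of \Cref{thm:lower_bound_clifford} so that it gives a size-sensitive bound rather than just a depth bound. In that theorem, a single commuting layer was bounded by $2^{n^2/2 + O(n)}$ possibilities by letting every pair of qubits potentially host a generalized CNOT. Here I would instead parametrize each layer by the number $g$ of two-qubit gates it contains, and bound the number of such layers by $2^{O(n)} \binom{n(n-1)/2}{g}$: the $2^{O(n)}$ factor accounts for the per-qubit Pauli label plus the $24$ single-qubit Clifford on each ``$I$'' qubit, and the binomial counts the choice of which $g$ pairs among the $n(n-1)/2$ pairs carry a generalized CNOT gate.

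Next I would aggregate over the $d$ layers. Writing $g_1,\dots,g_d$ for the per-layer two-qubit gate counts and using the Vandermonde convolution $\sum_{g_1+\cdots+g_d = s} \prod_i \binom{n(n-1)/2}{g_i} = \binom{d n(n-1)/2}{s}$, the total number of depth-$d$ Clifford circuits with at most $s$ two-qubit gates is bounded by
\[
2^{O(dn)} \cdot \binom{d n^2/2}{s} \;\leq\; 2^{O(dn)} \left( \frac{e\, d\, n^2}{2s} \right)^{s}.
\]
Comparing to the asymptotic count $2^{2n^2 + O(n)}$ of Clifford operations on $n$ qubits (OEIS \cite{oeis003956}) and taking logs, if all $n$-qubit Cliffords were realizable at depth $d$ and size $s$, one would need
\[
O(dn) + s\bigl(1 + \log_2(e\, d\, n^2/(2s))\bigr) \;\geq\; 2n^2 + O(n).
\]

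I would now substitute the candidate threshold $s^{\star} = \tfrac{n^2}{2(1+\log_2 d)}$: a direct calculation gives $\log_2(e d n^2/(2 s^{\star})) = \log_2 d + O(\log\log d)$, so $s^{\star}\bigl(1+\log_2(e d n^2/(2 s^{\star}))\bigr) = \tfrac{n^2}{2} + o(n^2)$. In the regime $d \leq n/5$ the slack term $O(dn)$ is at most $O(n^2/5)$, so the right-hand side is at most $\tfrac{7n^2}{10} + O(n)$, which is strictly less than $2n^2 + O(n)$ for large $n$. Hence some Clifford cannot be realized with depth $d \leq n/5$ and fewer than $s^{\star} = \tfrac{n^2}{2(1+\log_2 d)}$ two-qubit gates. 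Specializing to $d = O(1)$ recovers the claimed $\Omega(n^2)$ size lower bound.

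The main technical obstacle is making sure the per-layer count really is $2^{O(n)}\binom{n(n-1)/2}{g}$ rather than something larger: I need to confirm that the ``Pauli-label'' classification from \Cref{thm:lower_bound_clifford} still encompasses every commuting two-qubit Clifford layer when we also allow single-qubit gates and SWAPs, and that overcounting (e.g.\ orderings within a layer, or different single-qubit decorations not being absorbed into $2^{O(n)}$) stays within the $O(dn)$ slack. Once that bookkeeping is clean, the Vandermonde step and the substitution are routine and yield the stated tradeoff.
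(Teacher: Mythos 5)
Your proposal follows essentially the same counting argument as the paper: bound per-layer possibilities by separating single-qubit configurations (a $2^{O(dn)}$ factor) from two-qubit gate placements (a $\binom{\approx d n^2/2}{\le s}$ factor), apply $\binom{N}{s} \le (Ne/s)^s$, substitute the threshold $s^{\star} = n^2/(2(1+\log_2 d))$ together with $d \le n/5$, and compare against the $2^{2n^2 + O(n)}$ total number of Clifford operations. The bookkeeping differs only cosmetically---the paper normalizes every two-qubit gate to CZ by conjugation (absorbing the Pauli-label choices into a $(d{+}1)$-st single-qubit layer) and counts CZ positions across all layers at once rather than via a per-layer Vandermonde convolution---and your estimate $s^{\star}\bigl(1+\log_2(edn^2/(2s^{\star}))\bigr) = \tfrac{n^2}{2} + o(n^2)$ is only accurate as $d \to \infty$ (for constant $d$ the correction is $\Theta(n^2)$), but the final comparison to $2n^2$ still goes through comfortably, so the argument is sound.
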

    \begin{proof}
        As before, the proof is a counting argument. Suppose (toward a contradiction) that all Clifford operations have circuits of commutative depth $d \leq \tfrac{n}{5}$ with at most $s \leq \frac{n^2}{2(1 + \log_2 d)}$ two-qubit gates. We will show that there are not enough circuits for all $2^{2n^2 + 3n + O(1)}$ Clifford operations on $n$ qubits. 
        
        Consider an arbitrary Clifford circuit of commutative depth $d$ with $s$ two-qubit gates. First, we divide each layer into a single-qubit layer followed by a two-qubit layer---recall that we may reorder gates within a layer arbitrarily since they commute. 

        Next, the two-qubit Clifford generators are generalized CNOT gates that, as established in the previous theorem, commute if and only if they share the same Pauli on the qubit(s) where they overlap. If we conjugate by an appropriate single-qubit gate on the qubits with $X$-controls or $Y$-controls, we can make them all $Z$-controls. Without loss of generality, the two-qubit gates are all CZs since the single-qubit gates can be absorbed into the layer of single-qubit gates before and after each two-qubit layer. Note that this creates a $d+1$'st layer of single-qubit gates at the end of the circuit, and the number of two-qubit gates is preserved. 
        
        There are $24^{nd}$ configurations of the main $d$ single-qubit layers, since there are $24$ single-qubit Clifford gates and $nd$ sites where they are applied. The final layer contributes an additional $3^{n}$ configurations, since each qubit is conjugated by one of three gates: identity, $H$, or $HS$. We get the following bound on the number of single-qubit gate configurations in our circuit:
        \[
        \text{\# single-qubit gate config.} \leq 24^{nd} 3^{n} \leq 2^{4.585nd+1.585n} \leq 2^{n^2 + 2n},
        \]
        using the fact that $d \leq \tfrac{n}{5}$.

        In the $d$ two-qubit layers, there are $\binom{n}{2}$ positions for a CZ per layer, or $N := d \cdot \binom{n}{2}$ total. It follows that there are $\binom{N}{k}$ configurations of the two-qubit layers having a total of $k$ CZ gates. Now let $k$ range from $0$ up to $s$, the size of the circuit counting two-qubit gates only. We can upper bound this sum with
        \[
        \sum_{k=0}^{s} \binom{N}{k} \leq \sum_{k=0}^{s} \frac{N^k}{k!} = \sum_{k=0}^{s} \left( \frac{N}{s} \right)^k \frac{s^k}{k!} \leq  \left( \frac{N}{s} \right)^s \left( \sum_{k=0}^{s} \frac{s^k}{k!} \right) \leq \left( \frac{Ne}{s} \right)^{s},
        \]
        where we have used that $s \leq N$. It is not hard to check that this function is increasing for all $s < N$, so we can bound the number of two-qubit gate configurations by substituting the upper bound $\frac{n^2}{1 + \log_2 d}$ for $s$:
        \begin{align*}
        \text{\# two-qubit gate config.} 
        &\leq \left( \frac{Ne}{s} \right)^s \\
        &\leq \left( \frac{2\binom{n}{2} d(1 + \log_2 d)e}{n^2} \right)^{\frac{n^2}{1 + \log d}} \\
        &\leq \left( 4d^2 \right)^{\frac{n^2}{2(1 + \log d)}} && \text{since $1 + \log_2 d \leq d$, $e \leq 4$, and $2\binom{n}{2} \leq n^2$,} \\
        &= 2^{n^2} &&\text{since $4d^2 = 2^{2(1 + \log_2 d)}$.}
        \end{align*}

        We showed that single-qubit gates contribute $\leq 2^{n^2 + 2n}$ configurations, and two-qubit gates contribute $\leq 2^{n^2}$, so there are at most $2^{n^2 + 2n}$ circuits of depth $d \leq \tfrac{n}{5}$ with at most $s \leq \tfrac{n^2}{2(1 + \log_2 d)}$ two-qubit gates. There are $2^{2n^2 + 3n + O(1)}$ Clifford operations on $n$ qubits, however, contradicting our initial assumption and finishing the proof. 
    \end{proof}
\section{Acknowledgments}

We would like to thank Atsuya Hasegawa and Koyo Hayashi for sharing their version of Lemma~\ref{lem:invertible_upper_block_2} \cite{HasegawaH2025} and other helpful discussions about this work, and Crystal Senko for pointing us to Refs.~\cite{FiggattO+2019,GrzesiakB+2020}.


\bibliography{ref}
\bibliographystyle{plain}

\ifbool{FULL}{

\appendix

\section{Analysis of the density of 1s in \texorpdfstring{$L$}{L}, \texorpdfstring{$R$}{R}, \texorpdfstring{$L^{-1}$}{inverse L}, and \texorpdfstring{$R^{-1}$}{inverse R}}\label{appendix:A}

For the case where $n=15$, $L$ is the circuit in \Cref{fig:LF-L-R}(a), and the binary matrix associated with this circuit is
\begin{align}
    L_{15} = 
    \begin{bmatrix}
        1 &   &   &   &   &   &   &   &   &   &   &   &   &   &   \\
        1 & 1 &   &   &   &   &   &   &   &   &   &   &   &   &   \\
        0 & 0 & 1 &   &   &   &   &   &   &   &   &   &   &   &   \\
        1 & 1 & 1 & 1 &   &   &   &   &   &   &   &   &   &   &   \\
        0 & 0 & 0 & 0 & 1 &   &   &   &   &   &   &   &   &   &   \\
        0 & 0 & 0 & 0 & 1 & 1 &   &   &   &   &   &   &   &   &   \\
        0 & 0 & 0 & 0 & 0 & 0 & 1 &   &   &   &   &   &   &   &   \\
        1 & 1 & 1 & 1 & 1 & 1 & 1 & 1 &   &   &   &   &   &   &   \\
        0 & 0 & 0 & 0 & 0 & 0 & 0 & 0 & 1 &   &   &   &   &   &   \\
        0 & 0 & 0 & 0 & 0 & 0 & 0 & 0 & 1 & 1 &   &   &   &   &   \\
        0 & 0 & 0 & 0 & 0 & 0 & 0 & 0 & 0 & 0 & 1 &   &   &   &   \\
        0 & 0 & 0 & 0 & 0 & 0 & 0 & 0 & 1 & 1 & 1 & 1 &   &   &   \\
        0 & 0 & 0 & 0 & 0 & 0 & 0 & 0 & 0 & 0 & 0 & 0 & 1 &   &   \\
        0 & 0 & 0 & 0 & 0 & 0 & 0 & 0 & 0 & 0 & 0 & 0 & 1 & 1 &   \\
        0 & 0 & 0 & 0 & 0 & 0 & 0 & 0 & 0 & 0 & 0 & 0 & 0 & 0 & 1 \\
    \end{bmatrix}.
\end{align}
For any $n = 2^k - 1$, it is straightforward to deduce that $L_n$ has the recursive structure  
\begin{align}
\mbox{\large $L_{2n+1}$} = 
\begin{bmatrix}
    & \!\!\!\!\!\begin{smallmatrix}\ \end{smallmatrix} & \\[-3.0mm]
    \, \mbox{\large $L_n$} & & \\
    \,\begin{smallmatrix}1\, & \cdots & 1\end{smallmatrix} & \!\!\!\!\!\begin{smallmatrix}1\end{smallmatrix} & \\[-1.5mm]
    & \!\!\!\!\!\begin{smallmatrix}0\end{smallmatrix} \\[-0.8mm]
    \ \mbox{\LARGE $0$} & \!\!\!\!\!\begin{smallmatrix}{:} \\[1.3mm] 0\end{smallmatrix} & \!\!\! \mbox{\large $L_n$}\,\, \\[2mm]
    \end{bmatrix}.
\end{align}
Therefore, the density of 1s in $L_n$, which we denote by $d_n$, satisfies the recurrence
\begin{align}
    d_{2n+1} = 2d_n + n+1,
\end{align}
which implies $d_n = O(n\log n)$.

The binary matrix associated with $L^{-1}$ for the case $n=15$ is
\begin{align}
    L^{-1}_{15} = 
    \begin{bmatrix}
        1 &   &   &   &   &   &   &   &   &   &   &   &   &   &   \\
        1 & 1 &   &   &   &   &   &   &   &   &   &   &   &   &   \\
        0 & 0 & 1 &   &   &   &   &   &   &   &   &   &   &   &   \\
        0 & 1 & 1 & 1 &   &   &   &   &   &   &   &   &   &   &   \\
        0 & 0 & 0 & 0 & 1 &   &   &   &   &   &   &   &   &   &   \\
        0 & 0 & 0 & 0 & 1 & 1 &   &   &   &   &   &   &   &   &   \\
        0 & 0 & 0 & 0 & 0 & 0 & 1 &   &   &   &   &   &   &   &   \\
        0 & 0 & 0 & 1 & 0 & 1 & 1 & 1 &   &   &   &   &   &   &   \\
        0 & 0 & 0 & 0 & 0 & 0 & 0 & 0 & 1 &   &   &   &   &   &   \\
        0 & 0 & 0 & 0 & 0 & 0 & 0 & 0 & 1 & 1 &   &   &   &   &   \\
        0 & 0 & 0 & 0 & 0 & 0 & 0 & 0 & 0 & 0 & 1 &   &   &   &   \\
        0 & 0 & 0 & 0 & 0 & 0 & 0 & 0 & 0 & 1 & 1 & 1 &   &   &   \\
        0 & 0 & 0 & 0 & 0 & 0 & 0 & 0 & 0 & 0 & 0 & 0 & 1 &   &   \\
        0 & 0 & 0 & 0 & 0 & 0 & 0 & 0 & 0 & 0 & 0 & 0 & 1 & 1 &   \\
        0 & 0 & 0 & 0 & 0 & 0 & 0 & 0 & 0 & 0 & 0 & 0 & 0 & 0 & 1 \\
    \end{bmatrix},
\end{align}
and, for any $n = 2^k - 1$, has the recursive structure
\begin{align}
\mbox{\large $L^{-1}_{2n+1}$} = 
\begin{bmatrix}
    & \!\!\!\!\!\begin{smallmatrix}\ \end{smallmatrix} & \\[-3.0mm]
    \, \mbox{\large $L^{-1}_n$} & & \\[1mm]
    \,\begin{smallmatrix}b_n & \cdots & b_1\end{smallmatrix} & \!\!\!\!\!\begin{smallmatrix}1\end{smallmatrix} & \\[-1.5mm]
    & \!\!\!\!\!\begin{smallmatrix}0\end{smallmatrix} \\[-0.8mm]
    \ \mbox{\LARGE $0$} & \!\!\!\!\!\begin{smallmatrix}: \\[1.3mm] 0\end{smallmatrix} & \!\!\! \mbox{\large $L^{-1}_n$}\,\, \\[2mm]
    \end{bmatrix},
\end{align}
where
\begin{align}
    b_j =
    \begin{cases}
        1 & \mbox{if $j$ is a power of 2} \\
        0 & \mbox{otherwise.}
    \end{cases}
\end{align}
Therefore, the density of 1s in $L^{-1}_n$, which we denote by $\tilde{d}_n$, satisfies the recurrence
\begin{align}
    \tilde{d}_{2n+1} = 2\tilde{d}_n + O(\log n),
\end{align}
which implies $\tilde{d}_n = O(n)$.

The weight of $R$ and $R^{-1}$ can be deduced from the weights of $L$ and $L^{-1}$ on account of the following lemma.

\begin{lemma}
    For any invertible binary $n\times n$ matrix $M$, it holds that $H^{\otimes n} B M^{-1} B H^{\otimes n}$ is the anti-transpose of $M$ (where \emph{anti-transpose} of a square matrix is the matrix flipped along the anti-diagonal, which is the diagonal from the bottom-left corner to the top-right corner).
\end{lemma}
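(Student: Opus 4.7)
The plan is to reduce the identity to a matrix-level calculation by interpreting conjugation by $H^{\otimes n}$ and by $B$ at the level of the associated binary matrix. Let $J$ denote the $n\times n$ anti-diagonal permutation matrix, so $J_{ik}=1$ iff $i+k=n+1$, and write $U_N$ for the linear permutation $\ket{x}\mapsto \ket{Nx}$. First I would observe that $B$ is itself a linear permutation, namely $B = U_J$, and since $J = J^{-1} = J^{T}$, conjugation by $B$ amounts to matrix conjugation by $J$, i.e.\ $B\, U_N\, B = U_{JNJ}$.

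Next I would establish the Hadamard conjugation rule $H^{\otimes n}\, U_N\, H^{\otimes n} = U_{N^{-T}}$ by a short Fourier-style computation: expanding each $H^{\otimes n}$ via $H^{\otimes n}\ket{x} = 2^{-n/2}\sum_y (-1)^{x\cdot y}\ket{y}$, pushing $U_N$ inside to relabel $y\mapsto Ny$ in the phase, and interchanging sums. The inner character sum $\sum_y (-1)^{(x + N^{T}z)\cdot y}$ collapses to $2^n$ times the indicator that $N^{T}z = x$, leaving the single basis state $\ket{N^{-T}x}$.

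Combining the two rules, $H^{\otimes n}\, B\, M^{-1}\, B\, H^{\otimes n}$ corresponds to the matrix $(J M^{-1} J)^{-T}$, and using $J = J^{-1} = J^{T}$ this simplifies to $(JMJ)^{T} = J M^{T} J$. To finish, I would compute entries: from $J_{ik} = \delta_{k,\,n+1-i}$ one gets $(J M^{T} J)_{ij} = (M^{T})_{n+1-i,\,n+1-j} = M_{n+1-j,\,n+1-i}$, which is exactly the reflection of $M$ across the anti-diagonal.

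There is no real obstacle; the only thing to keep straight is the bookkeeping of transposes and inverses when the two conjugations are interleaved---in particular noting that the Hadamard rule applies $(\cdot)^{-T}$ to the \emph{entire} matrix $J M^{-1} J$ rather than to $M^{-1}$ alone, after which the $J$'s commute past the transpose (since $J^{T}=J$) and the inverse cancels the inverse on $M$.
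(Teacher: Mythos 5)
Your proof is correct, and in fact the paper states this lemma without proof, so you have filled a gap rather than paralleled an existing argument. The decomposition into two conjugation rules is the natural route: $B = U_J$ with $J$ the anti-diagonal permutation gives $B\,U_N\,B = U_{JNJ}$ since $U_A U_B = U_{AB}$ and $J^2 = I$; and the Fourier computation showing $H^{\otimes n}\,U_N\,H^{\otimes n} = U_{N^{-T}}$ is exactly the right tool (and is consistent with the familiar $2\times 2$ case of $H^{\otimes 2}\,\mathsf{CNOT}\,H^{\otimes 2}$ reversing control and target, since a $\mathsf{CNOT}$ matrix over $\mathbb{F}_2$ is its own inverse, collapsing $N^{-T}$ to $N^{T}$). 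The bookkeeping in the final simplification, $(J M^{-1} J)^{-T} = (J M J)^{T} = J M^{T} J$ using $J = J^{-1} = J^{T}$, and the entry check $(J M^{T} J)_{ij} = M_{n+1-j,\,n+1-i}$, are both accurate and match the definition of anti-transpose. No gaps.
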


Combining this with Eqns.~\eqref{eq:R-in-terms-of-L}~and~\eqref{eq:L-in-terms-of-R}, we can deduce that the weight of $R$ is $O(n \log n)$ and the weight of $R^{-1}$ is $O(n)$.

\section{Prefix Sum for arbitrary even \texorpdfstring{$n$}{n}}\label{appendix:B}

In section~\ref{sec:prefix-sums}, we showed how to compute Prefix Sums for $n = 2(2^k-1)$.
This is easily extended to $n = 2m$ for any odd $m$ by observing that, if the Ladner-Fischer circuits in Eq.~\eqref{fig:LF-L-R} are pruned by an equal number of qubits from the top and bottom then the result also correctly computes the Prefix Sums for the smaller number of qubits.

For example, for the circuit in Eq.~\eqref{fig:LF-L-R}, if the first three qubits and the last three qubits are pruned then the result is a 9-qubit circuit 


\begin{equation}\label{fig:LF-L-R-9-qubits}
\begin{matrix}
	\tikzsetnextfilename{LF-for-9-qubits-tikz}
	\begin{tikzpicture}
		\node[scale=0.65] {
        \begin{quantikz}[row sep={6mm,between origins},column sep=3mm]
		& & & \ctrl{4} & & \ctrl{2} & \ctrl{1} & \\
		& \ctrl{1} & & & & & \targ{} & \\
		& \targ{} & \ctrl{2} & & & \targ{} & \ctrl{1} & \\
		& \ctrl{1} & & & & & \targ{} & \\
		& \targ{}& \targ{} & \targ{} & \ctrl{4} & \ctrl{2} & \ctrl{1} & \\
		& \ctrl{1} & & & & & \targ{} & \\
		& \targ{} & \ctrl{2} & & & \targ{} & \ctrl{1} & \\
		& \ctrl{1} & & & & & \targ{} & \\
		& \targ{}& \targ{} & & \targ{} & & &
	\end{quantikz}
    };
    \end{tikzpicture}
	&
	\tikzsetnextfilename{LF-for-9-qubits-L}
	\begin{tikzpicture}
		\node[scale=0.65] {
        \begin{quantikz}[row sep={6mm,between origins},column sep=3mm]
		& & & \ctrl{4} & \\
		& \ctrl{1} & & & \\
		& \targ{} & \ctrl{2} & & \\
		& \ctrl{1} & & & \\
		& \targ{} & \targ{} & \targ{} &  \\
		& \ctrl{1} & & & \\
		& \targ{} & \ctrl{2} & & \\
		& \ctrl{1} & & & \\
		& \targ{} & \targ{} & &
	\end{quantikz}
    };
    \end{tikzpicture}
    &
	\tikzsetnextfilename{LF-for-9-qubits-R}
	\begin{tikzpicture}
		\node[scale=0.65] {
        \begin{quantikz}[row sep={6mm,between origins},column sep=3mm]
        & & \ctrl{2} & \ctrl{1} & \\
		& & & \targ{} & \\
		& & \targ{} & \ctrl{1} & \\
		& & & \targ{} & \\
		& \ctrl{4} & \ctrl{2} & \ctrl{1} & \\
		& & & \targ{} & \\
		& & \targ{} & \ctrl{1} & \\
		& & & \targ{} & \\
		& \targ{} & & &
	\end{quantikz} 
        };
    \end{tikzpicture} \\
    \text{(a) Pruned Ladner-Fischer ($n = 9$)} & \text{\ \ \ \ \ \ (b) $L$\ \ \ \ \ \  \ } & \text{\ \ \ \ \ \ (c) $R$\ \ \ \ \ \ \ }
\end{matrix}
\end{equation}
which correctly computes the 9-qubit instance of Prefix Sums and which exhibits the same structural properties that are used in section~\ref{sec:prefix-sums}.

This gives us Prefix Sums for all even numbers of the form $2(2k+1) = 4k+2$; however, this does not capture Prefix Sums for the case of even numbers of the form $4k$.

To capture the cases where $n = 4k$, we can add two qubits to the circuit in Eq.~\eqref{fig:simpler-P-P} without increasing the depth as follows.
Start with the $n = 2(2k+1)$ circuit in Eq.~\eqref{fig:simpler-P-P} and add one qubit to the beginning and one qubit to the end.
Add one \textsf{CNOT} gate to the beginning (in parallel with the first $H$ layer) and one \textsf{CNOT} to the end (in parallel with the last $H$ layer), as shown in the fine-grained depiction of the beginning and the end of the circuit in Eq.~\eqref{fig:simpler-P-P}:


\begin{align}\label{fig:detail-left}
\begin{matrix}
    \tikzsetnextfilename{detail-new}
\begin{tikzpicture}
\node[scale=0.7]{
\begin{quantikz}[row sep={2.5mm,between origins},column sep=4mm]
	& \ctrl{2} & & \midstick{\large\ \ \ $\cdots$\ \ \ } & & & \\
    \setwiretype{n} & & & & \\
    & \targ{} & & \midstick[5,brackets=none]{\large\ \ \ $\cdots$\ \ \ } & \gate[5]{B}\hphantom{\hspace{1cm}} & \gate[5]{H}\hphantom{\hspace{1cm}} & \\
    & & & & & & \\
    & & & & & & \\
    & & & & & & \\
    & & & & & & \\
    \setwiretype{n} & & & & & & \\
    \setwiretype{n} & & & & & & \\
    & \gate[5]{H}\hphantom{\hspace{1cm}} & \gate[5]{B}\hphantom{\hspace{1cm}} & \midstick[5,brackets=none]{\large\ \ \ $\cdots$\ \ \ } & & & \\
    & & & & & & \\
    & & & & & & \\
    & & & & & & \\
    & & & & & \ctrl{2} & \\
    \setwiretype{n} & & & & \\
    & & & \midstick[5,brackets=none]{\large\ \ \ $\cdots$\ \ \ } & & \targ{} & \\
\end{quantikz}
} ;
\end{tikzpicture} \\
\text{Detailed depiction of the beginning and the end of circuit of Eq.~\eqref{fig:simpler-P-P}}\\
\text{modified with two additional qubits\hspace*{56mm}}
\end{matrix}
\end{align}
}{}

\end{document}